\newtheorem{remark}{Remark}
\newtheorem{definition}{Definition}
\newtheorem{theorem}{Theorem}
\newtheorem{proposition}{Proposition}
\newtheorem{lemma}{Lemma}
\newtheorem{example}{Example}
\newtheorem{corollary}{Corollary}
\newcommand{\oomit}[1]{}
\newcolumntype{M}[1]{>{\centering\arraybackslash}m{#1}}
\newcolumntype{N}{@{}m{0pt}@{}}
\begin{document}

\title{Reach-avoid Verification using Lyapunov Densities}

\author{Bai Xue
\thanks{State Key Lab. of Computer Science, Institute of Software, CAS, Beijing, China \{xuebai@ios.ac.cn\} 
}
\thanks{University of Chinese Academy of Sciences, Beijing, China}
}

\maketitle
\thispagestyle{empty}
\pagestyle{empty}

\begin{abstract}
Reach-avoid analysis combines the construction of safety and specific progress guarantees, and is able to formalize many important engineering problems. In this paper we study the reach-avoid verification problem of systems modelled by ordinary differential equations using Lyapunov densities. Firstly, the weak reach-avoid verification is considered. Given an initial set, a safe set and a target set, the weak reach-avoid verification is to verify whether the reach-avoid property (i.e., the system will enter the target set eventually while staying inside the safe set before the first target hitting time) holds for almost all states in the initial set. We propose two novel sufficient conditions using Lyapunov densities for the weak reach-avoid verification. These two sufficient conditions are shown to be weaker than existing ones, providing more possibilities of verifying weak reach-avoid properties successfully. Then, we generalize these conditions to the verification of reach-avoid properties for all states in the initial set. Finally, an example demonstrates theoretical developments of proposed conditions. 
\end{abstract}
\section{Introduction}
\label{sec:intro}
Reach-avoid analysis combines the construction of safety and specific progress guarantees for dynamical systems, as it addresses guarantees for both the eventual reach of desirable states and avoidance of unsafe states. It is employed in diverse engineering applications such as motion planning \cite{herbert2017fastrack}. Reach-avoid analysis in this paper mainly attempts to verify reach-avoid properties, i.e., verify whether a system starting from a legally initial set will enter a desirable target set eventually while reliably avoiding a set of unsafe states before hitting the target set. 

Various methods have been applied to certify the reach-avoid properties of engineering systems, e.g., \cite{margellos2011,korda2014controller,xue2016reach}. One of well-known methods is the barrier certificate method, which was originally proposed for safety verification of dynamical systems in \cite{PrajnaJP07} and then extended to reach-avoid verification in \cite{prajna2007convex}. Recently, guidance-barrier functions were  proposed in \cite{xue2022reach} for reach-avoid verification. These methods investigate reach-avoid properties of nonlinear dynamical systems without explicitly computing the solutions of these systems, as done in the stability analysis with Lyapunov functions. However, one may not be able to find a function to certify the reach-avoid property due to the fact that the solution trajectory for some initial state, which is in a negligible set (i.e., a set with measure zero), may not reach a desired set. Therefore, Lyapunov densities have been used to verify weak reach-avoid properties of nonlinear systems in \cite{prajna2007convex}. The notion `weak' is used to emphasize that the system satisfies a property for almost all points in the domain. Specially, Lyapunov densities evaluate how the measure of a set is evolving along the solutions. Thus, they provide certifications for not all points but almost every point in the domain. 

In this paper we investigate the problem of reach-avoid verification for systems modeled by ordinary differential equations. The main results presented in this paper for reach-avoid verification rely on Lyapunov densities. Firstly, the weak reach-avoid verification is considered. Given an initial set, a safe set and a target set, the weak reach-avoid verification is to verify the satisfaction of the weak reach-avoid property, which formulates that the system starting from almost all states in the initial set will enter the target set eventually while staying inside the safe set before the first target hitting time. Inspired by the conditions proposed in \cite{xue2022reach} for the strong reach-avoid verification (i.e., verify the satisfaction of the reach-avoid property for all states in the initial set), we propose two sufficient conditions in the density space for verifying the weak reach-avoid property.  Then, via analyzing the divergence of the vector field of the system, we establish the relationship between these two conditions and the ones in \cite{xue2022reach}, and further generalize these two conditions to the strong reach-avoid verification. \oomit{Afterwards, we construct a convex optimization, based on encoding one of these two conditions into semi-definite constraints via sum-of-squares decomposition for multivariate polynomials, for synthesizing weak reach-avoid controllers. }Finally, we demonstrate the theoretical developments of proposed methods on one example.

The contributions of this work are summarized below.
\begin{enumerate}
    \item Two novel conditions in the density space are proposed for the weak reach-avoid verification of systems modelled by ordinary differential equations. These two conditions are shown to be weaker than the one in \cite{prajna2007convex}, providing more possibilities of verifying weak reach-avoid properties successfully. 
    \item We generalize the conditions for the weak reach-avoid verification to the strong one, lifting their capabilities in reach-avoid verification.
\end{enumerate}

\subsection*{Related Work}
There are a large amount of works on reach-avoid analysis, e.g., \cite{summers2010verification,fisac2015reach,xue2016reach,ding2010robust,kochdumper2021aroc,fan2018controller,xue2019,margellos2011hamilton}. Thus, we do not intend to provide a comprehensive and thorough literature review, but rather present some closely related works here.

Lyapunov density has been first introduced in \cite{RANTZER2001161} as a tool to certify almost global stability of nonlinear systems. Almost global stability of origins means that the solutions converge to the origin for almost every initial state. In \cite{karabacak2019almost}, the result on nonlinear systems obtained in \cite{RANTZER2001161} has been generalized to nonlinear systems with time dependent switching. Sufficient conditions to ensure almost global stability of nonlinear systems with time dependent switching have been provided with the help of common Lyapunov density and multiple Lyapunov densities. Moreover, Lyapunov densities were used to certify almost global stability of nonautonomous systems \cite{monzon2005almost,masubuchi2017lyapunov} and switched systems with state-dependent switching \cite{masubuchi2016analysis}. Recently, they have been extended to the verification of temporal properties of nonlinear systems such as safety and reach-avoidance for nonlinear systems. Some sufficient conditions has been developed for nonlinear (disturbed) systems. Leaning upon the results in \cite{prajna2007convex}, certificates for weak safety and weak reach-avoid verification of nonlinear (switched) systems with time dependent switching have been given based on Lyapunov densities in \cite{kivilcim2019safety} and \cite{kivilcim2019safe}. Afterwards, the result in \cite{kivilcim2019safe} was extended to the weak reach-avoid verification of nonlinear systems with state-dependent switching in \cite{kivilcim2021weak}. In this paper we proposed two new sufficient conditions based on Lyapunov densities to the weak reach-avoid verification of nonlinear systems. They are shown to be weaker than the one in \cite{prajna2007convex}. Furthermore, we extend them to the strong reach-avoid verification.

\section{Preliminaries}
\label{sec:preliminaries}
We denote the space of $m$-times continuously differentiable functions mapping $\mathcal{X}\subseteq \mathbb{R}^n$ to $\mathbb{R}^p$ by $\mathcal{C}^m(\mathcal{X},\mathbb{R}^p)$. When $p=1$, we will simply write $\mathcal{C}^m(\mathcal{X})$, and for continuous functions ($m=0$), we will omit the superscript.  For a function $\bm{f}(\cdot):\mathbb{R}^n \rightarrow \mathbb{R}^n$ with $\bm{f}=(f_1,\ldots,f_n)^{\top}$, $\triangledown \bm{f}=\sum_{i=1}^n \frac{\partial f_i}{\partial x_i}$ denotes the divergence of $\bm{f}$; for a function $g(\cdot): \mathbb{R}^n \rightarrow \mathbb{R}$, $\triangledown g=(\frac{\partial g}{\partial x_1}, \ldots, \frac{\partial g}{\partial x_n})$ denotes the gradient of $g$; given a set $\mathcal{X}$, the sets $\partial \mathcal{X}$ and $\overline{\mathcal{X}}$ denote its boundary and closure, respectively.

\subsection{Problem Statement}
In this subsection we formulate the system and its associated strong/weak reach-avoid properties of interest.

The system of interest is a system whose dynamics are described by an ODE of the following form:
\begin{equation}
\label{sys}
\dot{\bm{x}}(t)=\bm{f}(\bm{x}(t)),
\bm{x}(0)=\bm{x}_0\in \mathbb{R}^n,
\end{equation}
where $\dot{\bm{x}}(t)=\frac{d \bm{x}(t)}{d t}$ and $\bm{f}(\bm{x})=(f_1(\bm{x}),\ldots,f_n(\bm{x}))^{\top}$ with $f_i(\bm{x})$ being locally Lipschitz continuous.

We denote the trajectory of system \eqref{sys} that originates from $\bm{x}_0 \in \mathbb{R}^n$ and is defined over the maximal time interval $[0,T_{\bm{x_0}})$ by $\bm{\phi}_{\bm{x}_0}(\cdot):[0,T_{\bm{x_0}})\rightarrow \mathbb{R}^n$. Consequently,
\[\bm{\phi}_{\bm{x}_0}(t):=\bm{x}(t), \forall t\in [0,T_{\bm{x_0}}), \text{ and } \bm{\phi}_{\bm{x}_0}(0) = \bm{x}_0,\] where $T_{\bm{x}_0}$ is either a positive value or $\infty$.

Given a \textit{bounded and open} safe set $\mathcal{X}$, an initial set $\mathcal{X}_0$ and a compact target set $\mathcal{X}_r$, where 
\begin{equation*}
\label{set}
\begin{split}
&\mathcal{X}=\{\bm{x}\in \mathbb{R}^n\mid h(\bm{x})<0\} \text{~with~} \partial \mathcal{X}=\{\bm{x}\in \mathbb{R}^n\mid h(\bm{x})=0\},\\
&\mathcal{X}_0=\{\bm{x}\in \mathbb{R}^n\mid l(\bm{x})< 0\}, \text{~and~}\\
&\mathcal{X}_r=\{\bm{x}\in \mathbb{R}^n\mid g(\bm{x})\leq 0\}
\end{split}
\end{equation*}
with $l(\bm{x}), h(\bm{x}),g (\bm{x}): \mathbb{R}^n \rightarrow \mathbb{R}$, and $\mathcal{X}_0, \mathcal{X}_r\subseteq \mathcal{X}$, both strong and weak reach-avoid properties are defined below.
\begin{definition}[Strong Reach-avoid Property]\label{SRNS}
Given system \eqref{sys} with the safe set $\mathcal{X}$, initial set $\mathcal{X}_0$ and target set $\mathcal{X}_r$, we say that the \textit{strong} reach-avoid property holds if for any initial condition $\bm{x}_0\in \mathcal{X}_0$, its trajectory $\bm{\phi}_{\bm{x}_0}(t)$ satisfies \[\bm{\phi}_{\bm{x}_0}(T)\in \mathcal{X}_r \bigwedge \forall t\in [0,T]. \bm{\phi}_{\bm{x}_0}(t)\in \mathcal{X}\] for some $T>0$.
\end{definition}

\begin{definition}[Weak Reach-avoid Property]\label{WRNS}
Given system \eqref{sys} with the safe set $\mathcal{X}$, initial set $\mathcal{X}_0$ and target set $\mathcal{X}_r$, we say that the \textit{weak} reach-avoid property holds if for \textit{almost all} initial conditions $\bm{x}_0\in \mathcal{X}_0$, its trajectory $\bm{\phi}_{\bm{x}_0}(t)$ satisfies \[\bm{\phi}_{\bm{x}_0}(T)\in \mathcal{X}_r \bigwedge \forall t\in [0,T]. \bm{\phi}_{\bm{x}_0}(t)\in \mathcal{X}\] for some $T>0$.
\end{definition}

\subsection{Conditions for Reach-avoid Verification}
In this subsection we recall existing sufficient conditions for assuring the satisfaction of strong and weak reach-avoid properties.

\begin{proposition}[Proposition 5, \cite{xue2022reach}]
\label{inequality1}
 Given system \eqref{sys} with sets $\mathcal{X}_0$, $\mathcal{X}_r$ and $\mathcal{X}$, if there exists a continuously differentiable function $v(\bm{x})\in \mathcal{C}^1(\overline{\mathcal{X}})$ such that 
 \begin{equation}
 \label{strong_exponen_con}
\begin{cases}
&v(\bm{x})>0, \forall \bm{x}\in \mathcal{X}_0\\
&\bigtriangledown  v(\bm{x}) \cdot \bm{f}(\bm{x})\geq \lambda v (\bm{x}), \forall \bm{x}\in \overline{\mathcal{X}\setminus \mathcal{X}_r} \\
&v(\bm{x})\leq 0, \forall \bm{x}\in \partial \mathcal{X}
\end{cases}
\end{equation}
where $\lambda>0$ is a user-defined value, then the strong reach-avoid property in Definition \ref{SRNS} holds.
\end{proposition}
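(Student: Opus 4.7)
The plan is to combine a Grönwall-type differential inequality along trajectories with the compactness of $\overline{\mathcal{X}}$ and a careful sign analysis at the first time the trajectory could leave $\mathcal{X}\setminus\mathcal{X}_r$. Fix $\bm{x}_0\in\mathcal{X}_0$, set $w(t):=v(\bm{\phi}_{\bm{x}_0}(t))$, and define $\tau:=\inf\{t\in[0,T_{\bm{x}_0}):\bm{\phi}_{\bm{x}_0}(t)\in\mathcal{X}_r\}$, with the convention $\tau=T_{\bm{x}_0}$ when this set is empty. Note $w(0)=v(\bm{x}_0)>0$ by the first condition.

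My first step would be to prove that $\bm{\phi}_{\bm{x}_0}(t)\in\mathcal{X}$ for every $t\in[0,\tau)$, purely from the sign of $w$. Arguing by contradiction, suppose $t^{*}\in[0,\tau)$ is the first time the trajectory reaches $\partial\mathcal{X}$; since $\bm{x}_0\in\mathcal{X}$ (open) we have $t^{*}>0$. On $[0,t^{*}]$ the trajectory avoids $\mathcal{X}_r$ (since $t^{*}<\tau$) and lies in $\overline{\mathcal{X}}$, hence in $\overline{\mathcal{X}\setminus\mathcal{X}_r}$, so the second condition yields $\dot w(s)\ge\lambda w(s)$. Grönwall's lemma then forces $w(t^{*})\ge v(\bm{x}_0)e^{\lambda t^{*}}>0$, contradicting the third condition $v\le 0$ on $\partial\mathcal{X}$.

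Next I would show $\tau<\infty$. Since the trajectory remains in the compact set $\overline{\mathcal{X}}$ throughout $[0,\tau)$, standard ODE continuation rules out finite-time blow-up, so $\tau=T_{\bm{x}_0}$ would force $\tau=\infty$. But $v$ is continuous on the compact set $\overline{\mathcal{X}}$, hence bounded above by some $M$, while the same Grönwall estimate gives $w(t)\ge v(\bm{x}_0)e^{\lambda t}\to\infty$, a contradiction. Hence $\tau$ is finite, and by continuity of $\bm{\phi}_{\bm{x}_0}$ together with closedness of $\mathcal{X}_r$, $\bm{\phi}_{\bm{x}_0}(\tau)\in\mathcal{X}_r\subseteq\mathcal{X}$. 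Taking $T:=\tau$ combined with the previous step establishes Definition \ref{SRNS}.

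The main technical obstacle is the bootstrapping at the first-exit argument: one must establish that the trajectory stays in $\mathcal{X}$ on $[0,\tau)$ as a \emph{consequence} of the Grönwall bound, rather than assuming it in order to apply the differential inequality. The key observation that unlocks this is that the inequality $\nabla v\cdot \bm{f}\ge\lambda v$ is assumed on the \emph{closed} set $\overline{\mathcal{X}\setminus\mathcal{X}_r}$, which allows the estimate to be pushed all the way up to and including the hypothetical boundary-hitting time $t^{*}$, where the sign contradiction is then extracted.
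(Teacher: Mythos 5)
The paper itself contains no proof of this statement---it is recalled verbatim as Proposition~5 of \cite{xue2022reach}---but your argument is correct and is essentially the intended one, coinciding with the strategy the paper deploys for its density-space analogues (Theorem~\ref{exponential_almost} and Corollary~\ref{weak_strong_ex}): Gr\"onwall's inequality along the trajectory gives $v(\bm{\phi}_{\bm{x}_0}(t))\geq e^{\lambda t}v(\bm{x}_0)>0$ for as long as the trajectory remains in $\overline{\mathcal{X}\setminus\mathcal{X}_r}$, which simultaneously rules out exit through $\partial\mathcal{X}$ (where $v\leq 0$) and, by boundedness of $v$ on the compact set $\overline{\mathcal{X}}$, forces a finite first hitting time of $\mathcal{X}_r$. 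Your bootstrapping step---pushing the differential inequality up to the hypothetical first boundary-hitting time, which lies in $\overline{\mathcal{X}\setminus\mathcal{X}_r}$ because the compact set $\mathcal{X}_r$ is contained in the open set $\mathcal{X}$---is exactly the observation the paper itself invokes when it notes that $\overline{\partial\mathcal{X}\setminus\partial\mathcal{X}_r}=\partial\mathcal{X}$.
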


\begin{proposition}[Proposition 4, \cite{xue2022reach}]
\label{inequality}
  Given system \eqref{sys} with sets $\mathcal{X}_0$, $\mathcal{X}_r$ and $\mathcal{X}$, if there exist a  continuously differentiable function $v(\bm{x})\in \mathcal{C}^1(\overline{\mathcal{X}})$ and a continuously differentiable function $w(\bm{x})\in \mathcal{C}^1(\overline{\mathcal{X}})$ satisfying 
 \begin{equation}
 \label{strong_asympt_con}
\begin{cases}
 v(\bm{x})>0, \forall \bm{x}\in \mathcal{X}_0\\
\bigtriangledown  v(\bm{x}) \cdot \bm{f}(\bm{x})\geq 0, \forall \bm{x}\in \overline{\mathcal{X}\setminus \mathcal{X}_r}\\
v(\bm{x})-\bigtriangledown  w(\bm{x}) \cdot \bm{f}(\bm{x})\leq 0, \forall \bm{x}\in \overline{\mathcal{X}\setminus \mathcal{X}_r} \\
v(\bm{x})\leq 0, \forall \bm{x}\in \partial \mathcal{X}
\end{cases}
\end{equation}
then the strong reach-avoid property in Definition \ref{SRNS} holds.
\end{proposition}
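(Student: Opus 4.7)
The plan is to follow the classical Lyapunov-style argument but split into two pieces: a safety part (trajectories stay in $\mathcal{X}$ until they hit $\mathcal{X}_r$) and a reachability part (trajectories must hit $\mathcal{X}_r$ in finite time), and to apply the conditions of \eqref{strong_asympt_con} along a fixed trajectory $\bm{\phi}_{\bm{x}_0}(\cdot)$ emanating from an arbitrary $\bm{x}_0\in\mathcal{X}_0$. Fix such an $\bm{x}_0$ and let $v_0:=v(\bm{x}_0)>0$ by the first inequality. Introduce the first exit/entry time $T^\star:=\inf\{t\in[0,T_{\bm{x}_0}):\bm{\phi}_{\bm{x}_0}(t)\notin \mathcal{X}\setminus\mathcal{X}_r\}$, with the convention $T^\star=T_{\bm{x}_0}$ if no such $t$ exists. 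The objective is to show $T^\star<\infty$ and $\bm{\phi}_{\bm{x}_0}(T^\star)\in\mathcal{X}_r$.

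For the safety part, observe that on $[0,T^\star)$ the trajectory lies in $\overline{\mathcal{X}\setminus\mathcal{X}_r}$, so the second inequality of \eqref{strong_asympt_con} yields $\tfrac{d}{dt}v(\bm{\phi}_{\bm{x}_0}(t))=\triangledown v(\bm{\phi}_{\bm{x}_0}(t))\cdot\bm{f}(\bm{\phi}_{\bm{x}_0}(t))\geq 0$. Hence $v(\bm{\phi}_{\bm{x}_0}(t))\geq v_0>0$ on $[0,T^\star)$. Since $\mathcal{X}$ is bounded and the trajectory is confined to $\overline{\mathcal{X}}$, no finite-time blow-up is possible and the solution extends up to $T^\star$ by continuation. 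If $T^\star<\infty$ and $\bm{\phi}_{\bm{x}_0}(T^\star)\in\partial\mathcal{X}$, the fourth inequality would give $v(\bm{\phi}_{\bm{x}_0}(T^\star))\leq 0$, contradicting the lower bound $v_0>0$ by continuity of $v$. Thus the only way to leave $\mathcal{X}\setminus\mathcal{X}_r$ at time $T^\star$ is by entering $\mathcal{X}_r$, which gives the required safety-until-reach property on $[0,T^\star]$.

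For the reachability part, suppose for contradiction that $T^\star=\infty$, so that the trajectory stays in $\overline{\mathcal{X}\setminus\mathcal{X}_r}$ for all $t\geq 0$ (existence for all time follows from boundedness of $\overline{\mathcal{X}}$ and local Lipschitz continuity of $\bm{f}$). Applying the third inequality of \eqref{strong_asympt_con} pointwise along the trajectory and integrating from $0$ to $T>0$ gives
\[
\int_0^T v(\bm{\phi}_{\bm{x}_0}(t))\,dt \;\leq\; \int_0^T \triangledown w(\bm{\phi}_{\bm{x}_0}(t))\cdot\bm{f}(\bm{\phi}_{\bm{x}_0}(t))\,dt \;=\; w(\bm{\phi}_{\bm{x}_0}(T))-w(\bm{x}_0).
\]
The right-hand side is bounded uniformly in $T$, since $w\in\mathcal{C}^1(\overline{\mathcal{X}})$ is continuous on the compact set $\overline{\mathcal{X}}$. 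The left-hand side, however, is bounded below by $v_0\cdot T$ in view of $v(\bm{\phi}_{\bm{x}_0}(t))\geq v_0>0$, which tends to $\infty$ as $T\to\infty$. This contradiction forces $T^\star<\infty$, and combined with the safety part establishes the strong reach-avoid property of Definition~\ref{SRNS}.

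The main obstacle I expect is the clean handling of the exit-time definition and the global existence of the trajectory on $[0,T^\star]$: one must argue that the trajectory cannot escape through $\partial\mathcal{X}$ (ruled out by $v$), cannot blow up in finite time (ruled out by boundedness of $\mathcal{X}$), and so $T^\star$ is well-defined as the first entry time into $\mathcal{X}_r$. Once this bookkeeping is in place, the two inequalities driving safety and reachability are each one-line applications of the fundamental theorem of calculus along the trajectory.
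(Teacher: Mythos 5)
Your proof is correct, and it follows the standard argument for this kind of guidance-barrier condition: the nonnegativity of $\bigtriangledown v\cdot\bm{f}$ keeps $v$ above $v(\bm{x}_0)>0$ along the trajectory so that exit through $\partial\mathcal{X}$ (where $v\leq 0$) is impossible, while integrating $v\leq\bigtriangledown w\cdot\bm{f}$ makes $w$ grow at least linearly and contradicts its boundedness on the compact set $\overline{\mathcal{X}}$, forcing a finite hitting time of $\mathcal{X}_r$. Note that the paper itself only recalls this proposition from \cite{xue2022reach} without reproving it, but your two-part safety/reachability decomposition is exactly the intended argument and the bookkeeping around the first exit time $T^\star$ is handled correctly.
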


The condition for the weak reach-avoid verification is presented in \cite{prajna2007convex}.

\begin{proposition}[Corollary 3.8, \cite{prajna2007convex}]
\label{weak_density}
    Given system \eqref{sys} with sets $\mathcal{X}_0$, $\mathcal{X}_r$ and $\mathcal{X}$, and an open set $\hat{\mathcal{X}}$ containing $\mathcal{X}_0$, if there exists a density function $\rho(\bm{x})\in \mathcal{C}^1(\overline{\mathcal{X}})$ which satisfies
\begin{equation}
\label{weak_2}
    \begin{cases}
        \rho(\bm{x})>0, \forall \bm{x}\in \hat{\mathcal{X}},\\
        \rho(\bm{x})\leq 0, \forall \bm{x}\in \partial \mathcal{X},\\
        \triangledown(\rho\bm{f})(\bm{x})>0, \forall \bm{x}\in \overline{\mathcal{X}\setminus \mathcal{X}_r},
    \end{cases}
\end{equation}
 where $\triangledown (\rho \bm{f})(\bm{x})=\triangledown \big(\rho(\bm{x}) \cdot \bm{f}(\bm{x})\big)=\triangledown \rho(\bm{x})\bm{f}(\bm{x})+\rho(\bm{x})\triangledown \bm{f}(\bm{x})$, then the weak reach-avoid property in Definition \ref{WRNS} holds.
\end{proposition}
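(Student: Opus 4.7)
My plan is a transport-of-density argument by contradiction, in the spirit of Rantzer's density-based proofs. Let $Z\subseteq\mathcal{X}_0$ be the set of initial states whose trajectory violates the reach-avoid property of Definition \ref{WRNS}, and suppose for contradiction that $\mathrm{Leb}(Z)>0$. Since $Z\subseteq\hat{\mathcal{X}}$ and $\rho>0$ on $\hat{\mathcal{X}}$, the $\rho$-weighted mass $\int_Z\rho(\bm{x})\,d\bm{x}$ is strictly positive, so we have a nontrivial positive measure to propagate.

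I would then push $Z$ forward along the flow and apply Liouville's theorem in integral form
\begin{equation*}
\int_{\bm{\phi}_t(Z_t)}\!\!\rho(\bm{x})\,d\bm{x}\;-\;\int_{Z_t}\!\!\rho(\bm{x})\,d\bm{x}\;=\;\int_0^t\!\!\int_{\bm{\phi}_s(Z_s)}\!\!\triangledown(\rho\bm{f})(\bm{x})\,d\bm{x}\,ds,
\end{equation*}
where $Z_t:=\{\bm{x}_0\in Z:\bm{\phi}_{\bm{x}_0}(s)\in\overline{\mathcal{X}\setminus\mathcal{X}_r}\text{ for all }s\in[0,t]\}$ is the subfamily on which $\bm{\phi}_s$ remains a diffeomorphism into the compact set $\overline{\mathcal{X}\setminus\mathcal{X}_r}$. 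The left-hand side is uniformly bounded in $t$ by $2\int_{\overline{\mathcal{X}}}|\rho(\bm{x})|\,d\bm{x}<\infty$, since $\overline{\mathcal{X}}$ is compact (as $\mathcal{X}$ is bounded and open) and $\rho$ is continuous. On the right, continuity of $\triangledown(\rho\bm{f})$ and compactness of $\overline{\mathcal{X}\setminus\mathcal{X}_r}$ give a uniform lower bound $\triangledown(\rho\bm{f})\geq\varepsilon>0$, and via the positivity of the Jacobian $|\det D\bm{\phi}_s|$ on $\overline{\mathcal{X}}$ I would argue that the right-hand side grows unboundedly in $t$ as long as $\mathrm{Leb}(Z_t)$ remains positive, yielding the contradiction and hence $\mathrm{Leb}(Z)=0$.

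The step I expect to be the principal obstacle is disposing of the portion of $Z$ whose trajectories exit $\mathcal{X}$ through $\partial\mathcal{X}$ before reaching $\mathcal{X}_r$: unlike the strong conditions of Propositions~\ref{inequality1}--\ref{inequality}, the density conditions do not enforce pointwise invariance of $\mathcal{X}$. The boundary sign constraint $\rho\leq 0$ on $\partial\mathcal{X}$ is exactly the ingredient that resolves this case, via the divergence theorem applied to the time-varying domain $\bm{\phi}_t(Z_t)$: the outward flux of $\rho\bm{f}$ across the portion of $\partial\bm{\phi}_t(Z_t)$ lying on $\partial\mathcal{X}$ is non-positive, so exiting trajectories cannot carry positive mass out and hence cannot absorb the unbounded growth produced by $\triangledown(\rho\bm{f})>0$. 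Making this flux bookkeeping rigorous --- in particular, verifying integrability of the boundary flux at possibly irregular exit points and ensuring that the exiting subset of $Z$ does not by itself deplete the positive $\rho$-mass --- is the most delicate part of the argument.
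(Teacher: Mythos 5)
First, note that the paper does not prove this proposition itself --- it is quoted from \cite{prajna2007convex} --- but it does prove the strictly more general Theorems~\ref{exponential_almost} and~\ref{sdf} and explicitly observes that condition \eqref{weak_2} is a special case of \eqref{weak_exponential_con} (on the compact set $\overline{\mathcal{X}\setminus \mathcal{X}_r}$, $\triangledown(\rho\bm{f})(\bm{x})\geq \epsilon_0>0$ and $\rho(\bm{x})\leq M\epsilon_0$ give $\triangledown(\rho\bm{f})(\bm{x})\geq \frac{1}{M}\rho(\bm{x})$), so the relevant comparison is with the proof of Theorem~\ref{exponential_almost}. Your reachability half (Liouville plus a growth-versus-boundedness contradiction) is the right idea, but your avoidance half has a genuine gap. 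You defer the trajectories that exit through $\partial\mathcal{X}$ before reaching $\mathcal{X}_r$ to a divergence-theorem flux bookkeeping on the part of $\partial\bm{\phi}_t(Z_t)$ lying on $\partial\mathcal{X}$, and you correctly flag this as the delicate step --- but it is not carried out, and as posed it is very hard to repair: for a merely measurable $Z$ the set $\bm{\phi}_t(Z_t)$ has no rectifiable boundary, exits can be tangential, and the transport theorem for a time-varying domain truncated at $\partial\mathcal{X}$ requires regularity you do not have. The standard resolution, and the one the paper uses, is pointwise rather than measure-theoretic: along any single trajectory remaining in $\overline{\mathcal{X}\setminus\mathcal{X}_r}$ with $\rho\geq 0$, one has $\frac{d}{dt}\rho(\bm{\phi}_{\bm{x}_0}(t))=\triangledown(\rho\bm{f})-\rho\,\triangledown\bm{f}\geq -\lambda_0\,\rho(\bm{\phi}_{\bm{x}_0}(t))$ with $\lambda_0=\max_{\overline{\mathcal{X}\setminus\mathcal{X}_r}}\triangledown\bm{f}$, so Gr\"onwall gives $\rho(\bm{\phi}_{\bm{x}_0}(t))\geq e^{-\lambda_0 t}\rho(\bm{x}_0)>0$; since $\rho\leq 0$ on $\partial\mathcal{X}$, no trajectory starting in $\{\rho>0\}$ can touch $\partial\mathcal{X}$ before entering $\mathcal{X}_r$. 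The exit set is therefore \emph{empty}, no flux accounting is needed, and the measure argument only has to dispose of the states whose trajectories stay in $\{\rho>0\}\setminus\mathcal{X}_r$ forever.

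There is also a quantitative hole in your growth estimate. Knowing that $\mathrm{Leb}(Z_t)$ ``remains positive'' does not make $\int_0^t\int_{\bm{\phi}_s(Z)}\triangledown(\rho\bm{f})\,d\bm{x}\,ds$ diverge: the flow could contract $Z$ so that $\mathrm{Leb}(\bm{\phi}_s(Z))$ decays integrably in $s$. You need a lower bound \emph{uniform in} $t$, which you can bootstrap as follows: monotonicity of $t\mapsto\int_{\bm{\phi}_t(Z)}\rho\,d\bm{x}$ (from $\triangledown(\rho\bm{f})>0$) gives $\int_{\bm{\phi}_t(Z)}\rho\,d\bm{x}\geq\int_Z\rho\,d\bm{x}>0$, hence $\mathrm{Leb}(\bm{\phi}_t(Z))\geq \int_Z\rho\,d\bm{x}\,/\max_{\overline{\mathcal{X}}}\rho>0$ for all $t$, and only then does the right-hand side grow at least linearly. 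Alternatively, convert $\triangledown(\rho\bm{f})\geq\epsilon_0$ into $\triangledown(\rho\bm{f})\geq\lambda\rho$ and run the Gr\"onwall argument on $\psi(t)=\int_{\bm{\phi}_Z(t)}\rho\,d\bm{x}$ exactly as in the proof of Theorem~\ref{exponential_almost}. Finally, apply Liouville's theorem (Lemma~\ref{density}) to the \emph{fixed} set $Z$ of forever-staying initial states, defined as the countable intersection in \eqref{Z}; inserting a time-varying family $Z_s$ inside the integrand is not what the lemma states and would itself need justification.
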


In \cite{prajna2007convex}, the constraint $\rho(\bm{x})\leq 0, \forall \bm{x}\in \overline{\partial \mathcal{X}\setminus \partial \mathcal{X}_r}$ rather than $\rho(\bm{x})\leq 0, \forall \bm{x}\in \partial \mathcal{X}$ is used, since $\partial \mathcal{X}\setminus \partial \mathcal{X}_r=\partial \mathcal{X}$ in this paper (it can be justified according to the fact that $\mathcal{X}$ is open, $\mathcal{X}_r$ is compact and $\mathcal{X}_r\subseteq \mathcal{X}$).

An obvious deficiency of the condition in Proposition \ref{weak_density} is that it is not applicable to system \eqref{sys} with 
\begin{equation}
\label{equaltozerro}
\exists \bm{x}_0 \in \mathcal{X}\setminus \mathcal{X}_r. \bm{f}(\bm{x}_0)=0 \wedge \sum_{i=1}^n \frac{\partial f_i(\bm{x}_0)}{\partial x_i}=0, 
\end{equation}
which results in $\triangledown (\rho\bm{f})(\bm{x}_0)=0$ for any $\rho(\bm{x})\in \mathcal{C}^1(\overline{\mathcal{X}})$.
However, the sufficient conditions proposed in the present work will apply to this system. Moreover, they are more expressive than condition \eqref{weak_2}.

\section{Reach-avoid Verification}
\label{sec:synthesis}
In this section we present our sufficient conditions for verifying the weak reach-avoid property in Definition \ref{WRNS}. These sufficient conditions are inspired by those in Proposition \ref{inequality1} and \ref{inequality} as well as density functions in \cite{prajna2007convex}. Afterwards, we exploit the relationship between the derived conditions and those in Proposition  \ref{inequality1} and \ref{inequality}, and formulate the situation under which the derived conditions can also be used to verifying the strong reach-avoid property in Definition \ref{SRNS}.

\subsection{Weak Reach-avoid Verification}
In this subsection we present our sufficient conditions for verifying the weak reach-avoid property in Definition \ref{WRNS}. The derivation of these conditions partly relies on Liouville's theorem \cite{RANTZER2001161}, which is formulated in Lemma \ref{density}. 

\begin{lemma}
\label{density}
    Let $\bm{f}=(f_1,\ldots,f_n)^{\top}\in \mathcal{C}^1(\mathcal{D},\mathbb{R}^n)$, where $\mathcal{D}\subseteq \mathbb{R}^n$ is open, and $\rho\in \mathcal{C}^1(\mathcal{D})$ be integrable. For $\bm{x}_0\in \mathbb{R}^n$, let $\bm{\phi}_{\bm{x}_0}(t)$ be the solution to system \eqref{sys} with $\bm{x}(0)=\bm{x}_0$. For a measurable set $Z$, assume that $\bm{\phi}_{Z}(\tau)=\{\bm{\phi}_{\bm{x}_0}(\tau)\mid \bm{x}_0\in Z\}$ is a subset of $\mathcal{D}$ for all $\tau \in [0,t]$. Then 
    \[\int_{\bm{\phi}_{Z}(t)}\rho(\bm{x})d\bm{x}-\int_{Z}\rho(\bm{x})d\bm{x}=\int_{0}^t \int_{\bm{\phi}_{Z}(\tau)}\triangledown (\rho \bm{f})(\bm{x})d\bm{x}d\tau.\]
\end{lemma}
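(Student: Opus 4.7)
The plan is to reduce the identity to a differential statement obtained by differentiating $t \mapsto \int_{\bm{\phi}_Z(t)} \rho(\bm{x})\,d\bm{x}$ and then integrating back. The natural tool is the change of variables formula, which pulls the integral over the moving region $\bm{\phi}_Z(t)$ back to a fixed integral over $Z$. Concretely, I would write
\[
\int_{\bm{\phi}_Z(t)} \rho(\bm{x})\,d\bm{x}
\;=\;
\int_{Z} \rho\!\left(\bm{\phi}_{\bm{x}_0}(t)\right)\,J(\bm{x}_0,t)\,d\bm{x}_0,
\]
where $J(\bm{x}_0,t) := \det\!\big(D_{\bm{x}_0}\bm{\phi}_{\bm{x}_0}(t)\big)$. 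The local Lipschitz (in fact $\mathcal{C}^1$) assumption on $\bm{f}$ guarantees that $\bm{\phi}_{\bm{x}_0}(t)$ is $\mathcal{C}^1$ jointly in $(\bm{x}_0,t)$ on the domain where it is defined, and that the flow is a diffeomorphism onto its image, so the change of variables is legitimate and $J$ is well-defined and continuous. Since $J(\bm{x}_0,0)=1$ and $J$ is continuous and nonvanishing along each trajectory (a linear ODE with nonsingular initial data stays nonsingular), $J>0$ throughout $[0,t]$, so the absolute value can be dropped.

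The second ingredient is the classical Jacobi/Liouville formula for the variational equation: differentiating the matrix ODE $\frac{d}{ds} D_{\bm{x}_0}\bm{\phi}_{\bm{x}_0}(s) = D\bm{f}\!\left(\bm{\phi}_{\bm{x}_0}(s)\right)\,D_{\bm{x}_0}\bm{\phi}_{\bm{x}_0}(s)$ and using $\frac{d}{ds}\det M(s) = \det M(s)\,\mathrm{tr}\!\left(M(s)^{-1}\dot M(s)\right)$ gives
\[
\frac{\partial J}{\partial s}(\bm{x}_0,s) \;=\; J(\bm{x}_0,s)\,\triangledown \bm{f}\!\left(\bm{\phi}_{\bm{x}_0}(s)\right).
\]
Combining this with the chain rule applied to $\rho(\bm{\phi}_{\bm{x}_0}(s))$ yields
\[
\frac{\partial}{\partial s}\Big(\rho(\bm{\phi}_{\bm{x}_0}(s))\,J(\bm{x}_0,s)\Big)
\;=\; \triangledown(\rho\bm{f})\!\left(\bm{\phi}_{\bm{x}_0}(s)\right)\,J(\bm{x}_0,s),
\]
which is precisely the identity that is needed pointwise in $\bm{x}_0$.

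Next I would differentiate $F(s) := \int_{\bm{\phi}_Z(s)} \rho(\bm{x})\,d\bm{x} = \int_Z \rho(\bm{\phi}_{\bm{x}_0}(s))\,J(\bm{x}_0,s)\,d\bm{x}_0$ in $s$, interchange the derivative and the integral over $Z$, and then push the resulting expression back to the moving region via the change of variables in the opposite direction, obtaining $F'(s) = \int_{\bm{\phi}_Z(s)} \triangledown(\rho\bm{f})(\bm{x})\,d\bm{x}$. Integrating from $0$ to $t$ and using $\bm{\phi}_Z(0)=Z$ gives the claimed equality.

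The main technical obstacle is justifying the interchange of $\partial_s$ with $\int_Z$, which is what lets me pass from a pointwise variational identity to an integral one. One typically reduces first to the case of a set $Z$ contained in a compact subset of $\mathcal{D}$ whose forward image $\bigcup_{\tau\in[0,t]} \bm{\phi}_Z(\tau)$ lies in a compact $K\subset\mathcal{D}$ (the hypothesis gives exactly this via continuity of the flow and a covering argument for measurable $Z$); on $K$ the quantities $\rho$, $\bm{f}$, $D\bm{f}$, $\triangledown\rho$ and $J$ are uniformly bounded, furnishing an integrable dominating function for the difference quotients and hence allowing dominated convergence. A secondary subtlety is that $Z$ is only assumed measurable, not open or smooth; this is handled by approximation (or by noting that the change of variables formula holds for measurable sets under a $\mathcal{C}^1$-diffeomorphism), but integrability of $\rho$ and the uniform bounds above make the limit step routine rather than the real content of the argument.
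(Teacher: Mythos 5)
Your proof is correct, and it is essentially the canonical argument: the paper itself gives no proof of this lemma, presenting it as Liouville's theorem cited from \cite{RANTZER2001161}, and the change-of-variables pullback to $Z$ combined with the Jacobi formula $\partial_s J = J\,\triangledown\bm{f}(\bm{\phi}_{\bm{x}_0}(s))$ and the product rule is exactly the derivation used in that reference. The one step worth being careful about, which you already flag, is the domination needed to differentiate under the integral over a merely measurable $Z$; exhausting $Z$ by its intersections with compact subsets of $\mathcal{D}$ (as in the paper's application, where everything lives in the bounded set $\overline{\mathcal{X}}$) makes this routine.
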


Our first sufficient condition, which is adapted from the one in Proposition \ref{inequality1}, for verifying the weak reach-avoid property in Definition \ref{WRNS} is formulated in Theorem \ref{exponential_almost}. 
\begin{theorem}
\label{exponential_almost}
Consider system \eqref{sys} with the safe set $\mathcal{X}$, target set $\mathcal{X}_r$ and initial set $\mathcal{X}_0$. Given a continuous function $\lambda(\bm{x})>0$ over $\overline{\mathcal{X}\setminus \mathcal{X}_r}$, if there exists a density function $\rho(\bm{x})\in \mathcal{C}^1(\overline{\mathcal{X}})$ satisfying 
 \begin{equation}
 \label{weak_exponential_con}
 \begin{cases}
 \rho(\bm{x})>0, \forall \bm{x}\in \mathcal{X}_0,\\
 \triangledown (\rho \bm{f})(\bm{x})\geq \lambda(\bm{x}) \rho(\bm{x}), \forall \bm{x}\in \overline{\mathcal{X}\setminus \mathcal{X}_r},\\
 \rho(\bm{x})\leq 0, \forall \bm{x}\in \partial \mathcal{X},
 \end{cases}
 \end{equation}
 then the weak reach-avoid property in Definition \ref{WRNS} holds.
\end{theorem}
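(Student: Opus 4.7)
The plan is a proof by contradiction that exploits the hypothesis in two complementary ways: a pointwise Gr\"onwall bound ruling out any trajectory from $\mathcal{X}_0$ ever touching $\partial\mathcal{X}$, and a Liouville-based volume-growth argument ruling out the subset $Z\subseteq\mathcal{X}_0$ whose trajectories remain in $\overline{\mathcal{X}\setminus\mathcal{X}_r}$ for all future time. Together these force the trajectory of almost every $\bm{x}_0\in\mathcal{X}_0$ to enter $\mathcal{X}_r$ in finite time while staying in $\mathcal{X}$.

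For the first step, I expand $\triangledown(\rho\bm{f})=\triangledown\rho\cdot\bm{f}+\rho\,\triangledown\bm{f}$ and rewrite the density inequality as the pointwise bound $\triangledown\rho\cdot\bm{f}\geq(\lambda-\triangledown\bm{f})\rho$ on $\overline{\mathcal{X}\setminus\mathcal{X}_r}$. Fix $\bm{x}_0\in\mathcal{X}_0$, set $r(t):=\rho(\bm{\phi}_{\bm{x}_0}(t))$ and $\alpha(t):=\lambda(\bm{\phi}_{\bm{x}_0}(t))-\triangledown\bm{f}(\bm{\phi}_{\bm{x}_0}(t))$. As long as the trajectory stays in $\overline{\mathcal{X}\setminus\mathcal{X}_r}$, the chain rule together with the pointwise bound yields $\dot r(t)\geq \alpha(t)r(t)$, and Gr\"onwall's lemma gives $r(t)\geq r(0)\exp\!\left(\int_0^t \alpha(s)\,ds\right)>0$ since $r(0)=\rho(\bm{x}_0)>0$. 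If the trajectory first left $\overline{\mathcal{X}\setminus\mathcal{X}_r}$ through $\partial\mathcal{X}$ at some time $t^\star$, continuity of $\rho$ would give $\rho(\bm{\phi}_{\bm{x}_0}(t^\star))>0$, contradicting $\rho\leq 0$ on $\partial\mathcal{X}$. Hence every trajectory from $\mathcal{X}_0$ either reaches $\mathcal{X}_r$ in finite time while staying in $\mathcal{X}$, or remains in $\overline{\mathcal{X}\setminus\mathcal{X}_r}$ for all $t\geq 0$.

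For the second step, set $Z:=\{\bm{x}_0\in\mathcal{X}_0:\bm{\phi}_{\bm{x}_0}(t)\in\overline{\mathcal{X}\setminus\mathcal{X}_r},\,\forall t\geq 0\}$ and suppose for contradiction that $|Z|>0$. By Lemma~\ref{density}, $M(t):=\int_{\bm{\phi}_Z(t)}\rho(\bm{x})\,d\bm{x}$ satisfies
\[
M(t)-M(0)=\int_0^t\!\int_{\bm{\phi}_Z(\tau)}\triangledown(\rho\bm{f})(\bm{x})\,d\bm{x}\,d\tau\geq \lambda_{\min}\int_0^t M(\tau)\,d\tau,
\]
where $\lambda_{\min}:=\min_{\overline{\mathcal{X}\setminus\mathcal{X}_r}}\lambda>0$ by continuity on a compact set, and the inequality uses both $\triangledown(\rho\bm{f})\geq\lambda\rho$ and the pointwise positivity of $\rho$ on $\bm{\phi}_Z(\tau)$ from the first step. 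Since $\rho>0$ on $\mathcal{X}_0\supseteq Z$, we have $M(0)>0$, and Gr\"onwall's inequality yields $M(t)\geq M(0)e^{\lambda_{\min} t}\to\infty$, which contradicts the uniform bound $M(t)\leq\bigl(\sup_{\overline{\mathcal{X}}}\rho\bigr)\cdot|\overline{\mathcal{X}}|<\infty$. Therefore $|Z|=0$, establishing the weak reach-avoid property.

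The main subtlety I anticipate is the legitimate invocation of Lemma~\ref{density}: one must verify measurability of $Z$ (routine, as a countable intersection over rational $t$ of preimages of the closed set $\overline{\mathcal{X}\setminus\mathcal{X}_r}$ under the continuous flow map) and ensure the smoothness hypotheses are met on a suitable open neighborhood containing $\bm{\phi}_Z(\tau)$ for every $\tau\in[0,t]$. The latter is addressed by a $\mathcal{C}^1$ extension of $\rho$ across $\partial\mathcal{X}$ together with the pointwise-positivity argument, which keeps $\bm{\phi}_Z(\tau)$ trapped inside $\overline{\mathcal{X}}$ for all $\tau\geq 0$ and hence guarantees $T_{\bm{x}_0}=\infty$ on $Z$. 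With these technical points handled, the two-step contradiction closes cleanly.
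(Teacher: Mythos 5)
Your proposal is correct and follows essentially the same two-step strategy as the paper's own proof: a pointwise Gr\"onwall argument along trajectories showing that positivity of $\rho$ is preserved until $\mathcal{X}_r$ is reached (so $\partial\mathcal{X}$, where $\rho\leq 0$, cannot be touched first), followed by the Liouville-plus-Gr\"onwall volume-growth argument forcing the non-reaching set $Z$ to have measure zero. The only differences are cosmetic — you phrase the first step via exclusion from $\partial\mathcal{X}$ rather than via invariance of $\mathcal{R}=\{\bm{x}\in\mathcal{X}\mid\rho(\bm{x})>0\}$, and you keep the time-varying rate $\alpha(t)$ where the paper bounds it by constants $\delta-\lambda_0$.
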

\begin{proof}
Since $\mathcal{X}$ is bounded, $ \overline{\mathcal{X}\setminus \mathcal{X}_r}$ is compact. Therefore, there exists $\delta>0$ such that 
\[\lambda(\bm{x})\geq \delta, \forall \bm{x}\in  \overline{\mathcal{X}\setminus \mathcal{X}_r}.\]

We first show that given $\bm{x}_0\in \mathcal{R}=\{\bm{x}\in \mathcal{X}\mid  \rho(\bm{x})>0\}$, if system \eqref{sys} leaves $\mathcal{R}$, it must enter $\mathcal{X}_r$ before leaving  $\mathcal{R}$. Suppose to the contrary that the flow $\bm{\phi}_{\bm{x}_0}(t)$ leaves $\mathcal{R}$ without entering $\mathcal{X}_r$ first. Let $T>0$ be the first time instant that $\bm{\phi}_{\bm{x}_0}(t)$ leaves $\mathcal{R}$. By this we mean that $\bm{\phi}_{\bm{x}_0}(t) \in \mathcal{R}\setminus \mathcal{X}_r$ for all $t\in [0,T)$ and $\bm{\phi}_{\bm{x}_0}(T)\in \partial \mathcal{R}$ (i.e., $\rho(\bm{\phi}_{\bm{x}_0}(T))=0$ and $\rho(\bm{\phi}_{\bm{x}_0}(t))>0$ for $t\in [0,T)$). Also, since 
\[\triangledown(\rho\bm{f})(\bm{x})\geq \lambda(\bm{x}) \rho(\bm{x}), \forall \bm{x}\in \overline{\mathcal{X}\setminus \mathcal{X}_r},\]
we have that 
\[\triangledown(\rho\bm{f})(\bm{x})\mid_{\bm{x}=\bm{\phi}_{\bm{x}_0}(t)}\geq \lambda(\bm{\phi}_{\bm{x}_0}(t))\rho(\bm{\phi}_{\bm{x}_0}(t)), \forall t\in [0,T].\]That is, 
\[
\begin{split}
\frac{d \rho(\bm{\phi}_{\bm{x}_0}(t))}{d t}&=\triangledown \rho(\bm{x})\cdot \bm{f}(\bm{x})\mid_{\bm{x}=\bm{\phi}_{\bm{x}_0}(t)}\\
&\geq (\lambda(\bm{\phi}_{\bm{x}_0}(t))-\triangledown \bm{f}(\bm{\phi}_{\bm{x}_0}(t)))\rho(\bm{\phi}_{\bm{x}_0}(t))\\
&\geq (\lambda(\bm{\phi}_{\bm{x}_0}(t))-\lambda_0) \rho(\bm{\phi}_{\bm{x}_0}(t))\\
&\geq (\delta-\lambda_0) \rho(\bm{\phi}_{\bm{x}_0}(t)), \forall t\in [0,T].
\end{split}
\]
where $\lambda_0=\max_{\bm{x}\in \overline{\mathcal{X}\setminus \mathcal{X}_r}} \triangledown \bm{f}(\bm{x})$. Thus, $\frac{d \big(-\rho(\bm{\phi}_{\bm{x}_0}(t))\big)}{d t} \leq (\delta-\lambda_0) (-\rho(\bm{\phi}_{\bm{x}_0}(t))), \forall t\in [0,T]$.  According to the Grönwall's inequality, we have that $\rho(\bm{\phi}_{\bm{x}_0}(T))\geq e^{\delta-\lambda_0} \rho(\bm{x}_0)$. This implies that $\rho(\bm{\phi}_{\bm{x}_0}(T))>0$, which contradicts $\rho(\bm{\phi}_{\bm{x}_0}(T))=0$. Therefore, there does not exist a trajectory which, starting from $\mathcal{R}$, will leave the set $\mathcal{R}$  before entering the target set $\mathcal{X}_r$.

Next, we show that the set of all initial conditions $\bm{x}_0$'s in $\mathcal{R}$ whose flows $\bm{\phi}_{\bm{x}_0}(t)$'s do not leave $\mathcal{R}\setminus \mathcal{X}_r$ in finite time is a set of measure zero. For these trajectories, $\rho(\bm{\phi}_{\bm{x}_0}(t))>0$ for $t\geq 0$. Now define
\begin{equation}
\label{Z}
Z=\bigcap_{i=1,2,\ldots}\{\bm{x}_0\in \mathcal{R}\mid \bm{\phi}_{\bm{x}_0}(t)\in \mathcal{R}\setminus \mathcal{X}_r, \forall t\in [0,i]\}.
\end{equation}
The set $Z$ is an intersection of countable open sets and hence is measurable. It contains all initial states in $\mathcal{R}$ for which the trajectories stay in $\mathcal{R}\setminus \mathcal{X}_r$ for all $t\geq 0$. That $Z$ is a set of measure zero can be shown using Lemma \ref{density} as follows. We have that 
\[\begin{split}
&\int_{\bm{\phi}_{Z}(t)}\rho(\bm{x}) d\bm{x}-\int_Z \rho(\bm{x}) d\bm{x}=\int_{0}^t \int_{\bm{\phi}_{Z}(\tau)}\triangledown (\rho \bm{f})(\bm{x})d\bm{x}d\tau\\
&\geq  \int_{0}^t \int_{\bm{\phi}_{Z}(\tau)} \lambda(\bm{x})\rho(\bm{x}) d\bm{x} d\tau\\
&\geq \delta \int_{0}^t \int_{\bm{\phi}_{Z}(\tau)} \rho(\bm{x}) d\bm{x} d\tau, \forall t\geq 0.
\end{split}
\]
where $\bm{\phi}_{Z}(t)=\{\bm{x}\mid \bm{x}=\bm{\phi}_{\bm{x}_0}(t),\bm{x}_0\in Z\}$. Let $\psi(t)=\int_{\bm{\phi}_{Z}(t)}\rho(\bm{x}) d\bm{x}$. Thus, 
$-\psi(t)\leq -\psi(0)+\delta \int_{0}^t (-\psi(\tau)) d\tau, \forall t\geq 0$, according to the Grönwall's inequality (integral form), we have $-\psi(t)\leq -e^{\delta t} \psi(0)$ for $t\geq 0$. Since $\rho(\bm{x})$ is bounded over $\mathcal{X}$, the measure of $Z$ is zero. 

Since $\rho(\bm{x})>0$ for $\bm{x}\in \mathcal{X}_0$, $\mathcal{X}_0\subseteq \mathcal{R}$ holds. Consequently, the conclusion holds.  
\end{proof}

From the proof of Theorem \ref{exponential_almost}, we observe that  if there exists a density function $\rho(\bm{x})\in \mathcal{C}^1(\overline{\mathcal{X}})$ satisfying condition \eqref{weak_exponential_con}, system \eqref{sys} starting from an initial state $\bm{x}_0\in \mathcal{R}$ will either stay inside $\mathcal{R}\setminus \mathcal{X}_r$ for all the time or enter the target set $\mathcal{X}_r$ in finite time while staying inside the safe set $\mathcal{R}\setminus \mathcal{X}_r$ before the first target hitting time. Moreover, the measure of initial states in $\mathcal{R}$ such that system \eqref{sys} starting from them will stay inside $\mathcal{R}\setminus \mathcal{X}_r$ for all the time is zero.

Comparing condition \eqref{weak_exponential_con} with condition \eqref{weak_2}, we observe that the term $\triangledown(\rho\bm{f})(\bm{x})$ in condition \eqref{weak_exponential_con} is required to be larger than zero only in the subset $\mathcal{R}\setminus \mathcal{X}_r$ rather than $\overline{\mathcal{X}\setminus \mathcal{X}_r}$. It can be non-positive in $\overline{\mathcal{X}\setminus(\mathcal{R}\setminus \mathcal{X}_r)}$. This renders condition \eqref{weak_exponential_con} applicable to the weak reach-avoid verification of system \eqref{sys} subject to \eqref{equaltozerro}. Besides, we can also conclude that if $\rho(\bm{x})\in \mathcal{C}^1(\overline{\mathcal{X}})$ satisfies $\triangledown (\rho \bm{f})(\bm{x})>0, \forall \bm{x}\in \overline{\mathcal{X}\setminus \mathcal{X}_r}$,  there exists $\lambda>0$ such that it satisfies $\triangledown (\rho \bm{f})(\bm{x})\geq \lambda \rho(\bm{x}), \forall \bm{x}\in \overline{\mathcal{X}\setminus \mathcal{X}_r}$.  This conclusion can be certified in the following way: That \[\triangledown (\rho \bm{f})(\bm{x})>0, \forall \bm{x}\in \overline{\mathcal{X}\setminus \mathcal{X}_r}\] implies that
\[\exists \epsilon_0>0. \triangledown (\rho \bm{f})(\bm{x})\geq \epsilon_0, \forall \bm{x}\in \overline{\mathcal{X}\setminus \mathcal{X}_r}.\]  Let $M\epsilon_0 \geq \rho(\bm{x})$ for $\bm{x}\in \overline{\mathcal{X}\setminus \mathcal{X}_r}$, where $M>0$. Therefore, we have \[\triangledown (\rho \bm{f})(\bm{x})\geq \lambda \rho(\bm{x}),  \forall \bm{x}\in \overline{\mathcal{X}\setminus \mathcal{X}_r},\]
where $\lambda=\frac{1}{M}$. Thus, condition \eqref{weak_exponential_con} is more expressive than \eqref{weak_2}.

It is worth noting that $\lambda(\bm{x})$ in condition \eqref{weak_exponential_con} should be strictly larger than zero over $ \overline{\mathcal{X}\setminus \mathcal{X}_r}$. If $\lambda(\bm{x}_0)=0$ for some $\bm{x}_0\in  \overline{\mathcal{X}\setminus \mathcal{X}_r}$, inspired by condition \eqref{strong_asympt_con}, we will present another condition for the weak reach-avoid verification.  
\begin{theorem}
\label{sdf}
Consider system \eqref{sys} with the safe set $\mathcal{X}$, target set $\mathcal{X}_r$ and initial set $\mathcal{X}_0$. Given a continuous function $\lambda(\bm{x})\geq 0$ over $\overline{\mathcal{X}\setminus \mathcal{X}_r}$, if there exist  density functions $\rho_1(\bm{x}), \rho_2(\bm{x})\in \mathcal{C}^1(\overline{\mathcal{X}})$ satisfying 
 \begin{equation}
 \label{weak_asympt_con}
 \begin{cases}
 \rho_1(\bm{x})>0, \forall \bm{x} \in \mathcal{X}_0,\\
 \triangledown (\rho_1 \bm{f})(\bm{x})\geq \lambda(\bm{x}) \rho_1(\bm{x}), \forall \bm{x}\in \overline{\mathcal{X}\setminus \mathcal{X}_r},\\
     \rho_1(\bm{x})\leq \triangledown (\rho_2\bm{f})(\bm{x}), \forall \bm{x}\in \overline{\mathcal{X}\setminus \mathcal{X}_r},\\
 \rho_1(\bm{x})\leq 0, \forall \bm{x}\in \partial\mathcal{X},
 \end{cases}
 \end{equation}
 then the weak reach-avoid property in Definition \ref{WRNS} holds. 
\end{theorem}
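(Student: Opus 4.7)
The plan is to mirror the two-step structure used in the proof of Theorem \ref{exponential_almost}. Put $\mathcal{R}=\{\bm{x}\in\mathcal{X}\mid \rho_1(\bm{x})>0\}$. By the first inequality of \eqref{weak_asympt_con} we have $\mathcal{X}_0\subseteq\mathcal{R}$, so it suffices to establish (a) every trajectory originating in $\mathcal{R}$ that ever leaves $\mathcal{R}$ must enter $\mathcal{X}_r$ before leaving, and (b) the set of initial points in $\mathcal{R}$ whose trajectories remain in $\mathcal{R}\setminus\mathcal{X}_r$ for all $t\geq 0$ has Lebesgue measure zero. Together with $\mathcal{X}_0\subseteq\mathcal{R}$ these two statements give the weak reach-avoid property.

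Step (a) carries over almost verbatim from Theorem \ref{exponential_almost}. Using $\triangledown\rho_1\cdot\bm{f}=\triangledown(\rho_1\bm{f})-\rho_1\triangledown\bm{f}$, the second inequality of \eqref{weak_asympt_con}, $\lambda(\bm{x})\geq 0$, and the uniform upper bound $\lambda_0=\max_{\overline{\mathcal{X}\setminus\mathcal{X}_r}}\triangledown\bm{f}$, I obtain along any trajectory in $\mathcal{R}\setminus\mathcal{X}_r$ the inequality
\[\frac{d}{dt}\rho_1(\bm{\phi}_{\bm{x}_0}(t))\geq -\lambda_0\,\rho_1(\bm{\phi}_{\bm{x}_0}(t)).\]
Grönwall then yields $\rho_1(\bm{\phi}_{\bm{x}_0}(T))\geq e^{-\lambda_0 T}\rho_1(\bm{x}_0)>0$ for any finite first exit time $T$, contradicting $\rho_1(\bm{\phi}_{\bm{x}_0}(T))=0$. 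Note that allowing $\lambda$ to vanish costs nothing in this step.

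Step (b) is the crux, and is where the new constraint $\rho_1\leq\triangledown(\rho_2\bm{f})$ is essential: because $\lambda(\bm{x})$ is only assumed nonnegative, the exponential blow-up of $\int_{\bm{\phi}_Z(\tau)}\rho_1\,d\bm{x}$ exploited in Theorem \ref{exponential_almost} is no longer available, and I must replace it by a linear blow-up driven by $\rho_2$. Define $Z$ as in \eqref{Z} and set $\psi_i(t)=\int_{\bm{\phi}_Z(t)}\rho_i(\bm{x})\,d\bm{x}$ for $i=1,2$. Applying Lemma \ref{density} to $\rho_1$, combined with $\triangledown(\rho_1\bm{f})\geq \lambda\rho_1\geq 0$ on $\overline{\mathcal{X}\setminus\mathcal{X}_r}\supseteq\bm{\phi}_Z(\tau)$, shows that $\psi_1$ is non-decreasing, so $\psi_1(\tau)\geq\psi_1(0)$ for every $\tau\geq 0$. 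Applying Lemma \ref{density} to $\rho_2$ and invoking $\triangledown(\rho_2\bm{f})\geq\rho_1$ then gives
\[\psi_2(t)-\psi_2(0)\;=\;\int_0^t\!\int_{\bm{\phi}_Z(\tau)}\triangledown(\rho_2\bm{f})(\bm{x})\,d\bm{x}\,d\tau\;\geq\;\int_0^t\psi_1(\tau)\,d\tau\;\geq\;t\,\psi_1(0).\]
Since $\rho_2\in\mathcal{C}^1(\overline{\mathcal{X}})$ and $\mathcal{X}$ is bounded, $\psi_2(t)$ is uniformly bounded in $t$. Letting $t\to\infty$ forces $\psi_1(0)\leq 0$, and because $\rho_1>0$ on $Z\subseteq\mathcal{R}$ this forces $m(Z)=0$.

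The main obstacle is precisely designing step (b): once $\lambda$ is allowed to vanish, a single-density Grönwall argument cannot close the loop, and one needs an auxiliary quantity whose time derivative along the flow sees the positive $\rho_1$-mass inside $\bm{\phi}_Z(\tau)$. The hypothesis $\rho_1\leq\triangledown(\rho_2\bm{f})$ is tailored for exactly this purpose, converting the non-decreasing $\psi_1$ into a linearly growing lower bound on $\psi_2$, whereupon boundedness of $\rho_2$ over $\overline{\mathcal{X}}$ completes the contradiction and (b) follows.
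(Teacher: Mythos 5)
Your proposal is correct and follows essentially the same route as the paper's own proof: the same Grönwall argument with $\lambda_0=\max_{\overline{\mathcal{X}\setminus\mathcal{X}_r}}\triangledown\bm{f}$ shows a trajectory cannot exit $\mathcal{R}$ before reaching $\mathcal{X}_r$, and the same two applications of Lemma \ref{density} (monotonicity of $\int_{\bm{\phi}_Z(t)}\rho_1$, then linear growth of $\int_{\bm{\phi}_Z(t)}\rho_2$ contradicting boundedness of $\rho_2$) show $Z$ has measure zero. The only difference is the order of the two steps, which is immaterial.
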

\begin{proof}
We first prove that the set of all initial states $\bm{x}_0$'s in $\mathcal{R}=\{\bm{x}\in \mathcal{X}\mid \rho(\bm{x})>0\}$ whose flows $\bm{\phi}_{\bm{x}_0}(t)$'s  do not leave the open set $\mathcal{R}\setminus \mathcal{X}_r$ in finite time is a set of measure zero. We show that the measure of the set $Z$ in \eqref{Z} is zero. Since $\bm{\phi}_{Z}(t)\subseteq  \mathcal{R}\setminus \mathcal{X}_r$ for $t\geq 0$, $\mathcal{R}\setminus \mathcal{X}_r$ is bounded, and $\rho(\bm{x})$ is continuous, where $\bm{\phi}_{Z}(t)=\{\bm{x}\mid \bm{x}=\bm{\phi}_{\bm{x}_0}(t),\bm{x}_0\in Z\}$, we have that 
\[
\begin{split}
&\int_{\bm{\phi}_{Z}(t)}\rho_1(\bm{x}) d\bm{x}-\int_Z \rho_1(\bm{x}) d\bm{x}\\
&=\int_{0}^t \int_{\bm{\phi}_{Z}(\tau)}\triangledown (\rho_1 \bm{f})(\bm{x})d\bm{x}d\tau \geq 0,  \forall t\geq 0,
\end{split}
\] 
 according to $\triangledown (\rho_1 \bm{f})(\bm{x})\geq 0, \forall \bm{x}\in \overline{\mathcal{R}\setminus \mathcal{X}_r}$.  Thus,
\begin{equation}
\label{dayudengyu}
\int_{\bm{\phi}_{Z}(t)}\rho_1(\bm{x}) d\bm{x}\geq \int_Z \rho_1(\bm{x}) d\bm{x}, \forall t\geq 0.
\end{equation} 
 
 Further,  since $\rho_1(\bm{x})\leq \triangledown (\rho_2\bm{f})(\bm{x}), \forall \bm{x}\in \overline{\mathcal{X}\setminus \mathcal{X}_r}$,  we have that 
 \[\int_{0}^t \int_{\bm{\phi}_Z(\tau)} \rho_1(\bm{x})d\bm{x}\leq \int_{0}^t \int_{\bm{\phi}_{Z}(\tau)}\triangledown (\rho_2 \bm{f})(\bm{x})d\bm{x}d\tau, \forall t\geq 0. \]
 Combining \eqref{dayudengyu}, we have that 
 \[\int_{Z}\rho_1(\bm{x}) d\bm{x}\leq \frac{\int_{\bm{\phi}_{Z}(t)}\rho_2(\bm{x}) d\bm{x}-\int_Z \rho_2(\bm{x}) d\bm{x}}{t}, \forall t\geq 0\]
 and consequently, \[\int_{Z}\rho_1(\bm{x}) d\bm{x}\leq 0=\lim_{t\rightarrow \infty}\frac{\int_{\bm{\phi}_Z(t)}\rho_2(\bm{x}) d\bm{x}-\int_Z \rho_2(\bm{x}) d\bm{x}}{t}.\] 
 Since $\rho_1(\bm{x})>0$ over $Z$, we have the conclusion that $Z$ is a set of measure zero. Therefore, the set of all initial conditions in $\mathcal{R}$ whose flows stay in $\mathcal{R}\setminus \mathcal{X}_r$ for all the time is a set of measure zero.

Now take any $\bm{x}_0\in \mathcal{R}$ whose flow leaves $\mathcal{R}\setminus \mathcal{X}_r$ in finite time. We will show that such a flow must enter $\mathcal{X}_r$ before leaving $\mathcal{R}$. Suppose to the contrary that the flow $\bm{\phi}_{\bm{x}_0}(t)$ leaves $\mathcal{R}$ without entering $\mathcal{X}_r$ first. Let $T>0$ be the first time instant that $\bm{\phi}_{\bm{x}_0}(T)\in \partial \mathcal{R}$, i.e., $\rho_1(\bm{\phi}_{\bm{x}_0}(T))=0$. 

Since $\triangledown (\rho_1\bm{f})(\bm{x})\geq 0, \forall \bm{x} \in \overline{\mathcal{R}\setminus \mathcal{X}_r}$ and $\rho_1(\bm{\phi}_{\bm{x}_0}(t))\geq 0$ for $t\in [0,T]$, we have that 
\begin{equation}
\label{expoenential_in}
\begin{split}
&\frac{d \rho_1(\bm{\phi}_{\bm{x}_0}(t))}{d t}=\triangledown \rho_1(\bm{x})\cdot \bm{f}(\bm{x})\mid_{\bm{x}=\bm{\phi}_{\bm{x}_0}(t)}\\
&\geq -\rho_1(\bm{x})\triangledown \bm{f}(\bm{x})\mid_{\bm{x}=\bm{\phi}_{\bm{x}_0}(t)}\\
&\geq -\lambda_0 \rho_1(\bm{\phi}_{\bm{x}_0}(t)), \forall t\in [0,T],
\end{split}
\end{equation}
where $\lambda_0=\max_{\bm{x}\in \overline{\mathcal{X}\setminus \mathcal{X}_r}} \triangledown \bm{f}(\bm{x})$. Consequently, we have that 
\[\rho_1(\bm{\phi}_{\bm{x}_0}(T))>0,\]
contradicting $\rho_1(\bm{\phi}_{\bm{x}_0}(T))=0$. Thus, we conclude that there must  exist $t\geq 0$ such that $\bm{\phi}_{\bm{x}_0}(t)\in \mathcal{X}_r$ and $\bm{\phi}_{\bm{x}_0}(\tau)\in \mathcal{R}$ for all $\tau\in [0,t]$.

Since $\rho_1(\bm{x})>0$ for $\bm{x}\in \mathcal{X}_0$, $\mathcal{X}_0\subseteq \mathcal{R}$ holds. Consequently, the conclusion holds.  
\end{proof}

Comparing conditions \eqref{weak_exponential_con} and \eqref{weak_asympt_con}, one difference lies in that condition \eqref{weak_asympt_con} allows $\lambda(\bm{x})$ to be equal to zero over some $\bm{x}\in  \overline{\mathcal{X}\setminus \mathcal{X}_r}$.  Since the `equal' sign is taken into account, constraint \[\triangledown (\rho_1 \bm{f})(\bm{x})\geq \lambda(\bm{x})\rho_1(\bm{x}), \forall \bm{x}\in \overline{\mathcal{X}\setminus \mathcal{X}_r}\] can only ensure that all trajectories starting from $\mathcal{R}$ cannot leave the set $\mathcal{R}$ if they do not reach the target set $\mathcal{X}_r$. This conclusion can be derived from \eqref{expoenential_in}. In order to ensure the reach of the target set $\mathcal{X}_r$, a new constraint, i.e., \[\rho_1(\bm{x})\leq \triangledown (\rho_2\bm{f})(\bm{x}), \forall \bm{x}\in \overline{\mathcal{X}\setminus \mathcal{X}_r},\] is introduced. This constraint ensures that the set of initial states in $\mathcal{R}$ such that system \eqref{sys} stays inside $\mathcal{R}\setminus \mathcal{X}_r$ for all the time is a set of measure zero. That is, it ensures that system \eqref{sys} starting from almost all initial states in $\mathcal{R}$ will reach the target set $\mathcal{X}_r$ eventually while staying inside $\mathcal{R}$ before the first target hitting time.  If $\lambda(\bm{x})>0$ over $\overline{\mathcal{X}\setminus \mathcal{X}_r}$, constraint $\rho_1(\bm{x})\leq \triangledown (\rho_2\bm{f})(\bm{x}), \forall \bm{x}\in \overline{\mathcal{X}\setminus \mathcal{X}_r}$ in condition \eqref{weak_asympt_con} can be removed and thus condition \eqref{weak_asympt_con} will equal condition \eqref{weak_exponential_con}.

Also, we can show that if there exists $\rho(\bm{x})\in \mathcal{C}^1(\overline{\mathcal{X}})$ satisfying \[\triangledown (\rho \bm{f})(\bm{x})>0, \forall \bm{x}\in \overline{\mathcal{X}\setminus \mathcal{X}_r},\] there exist $\rho_1(\bm{x}),\rho_2(\bm{x})\in \mathcal{C}^1(\overline{\mathcal{X}})$ such that \[\triangledown (\rho_1 \bm{f})(\bm{x})\geq \lambda(\bm{x})\rho_1(\bm{x}), \forall \bm{x}\in \overline{\mathcal{X}\setminus \mathcal{X}_r}\] and \[\rho_1(\bm{x})\leq \triangledown (\rho_2\bm{f})(\bm{x}), \forall \bm{x}\in \overline{\mathcal{X}\setminus \mathcal{X}_r}\] hold, where $\lambda(\bm{x})\equiv 0$ for $\bm{x}\in \overline{\mathcal{X}}$. This conclusion can be certified in the following way: That \[\triangledown (\rho \bm{f})(\bm{x})>0, \forall \bm{x}\in \overline{\mathcal{X}\setminus \mathcal{X}_r}\] implies that
\[\exists \epsilon_0>0. \triangledown (\rho \bm{f})(\bm{x})\geq \epsilon_0, \forall \bm{x}\in \overline{\mathcal{X}\setminus \mathcal{X}_r}.\]  Let $M\epsilon_0 \geq \rho(\bm{x})$ for $\bm{x}\in \overline{\mathcal{X}\setminus \mathcal{X}_r}$, where $M>0$. Therefore, we can take \[\rho_1(\bm{x}):=\rho(\bm{x}), \lambda(\bm{x}):=0, \rho_2(\bm{x}):=M\rho(\bm{x})\] over $\overline{\mathcal{X}\setminus \mathcal{X}_r}$, which satisfy
\[\triangledown (\rho_1 \bm{f})(\bm{x})\geq 0, \forall \bm{x}\in \overline{\mathcal{X}\setminus \mathcal{X}_r}\] and $\rho_1(\bm{x})\leq M\epsilon_0 \leq \triangledown (\rho_2\bm{f})(\bm{x}), \forall \bm{x}\in \overline{\mathcal{X}\setminus \mathcal{X}_r}$. Therefore, condition \eqref{weak_asympt_con} is also more expressive than \eqref{weak_2}.

\begin{remark}
If $\lambda(\bm{x})$ is allowed to take negative values over $ \overline{\mathcal{X}\setminus \mathcal{X}_r}$ in Theorem \ref{sdf}, then for ensuring satisfaction of the weak reach-avoid property in Definition \ref{WRNS}, 
the constraint  $\rho_1(\bm{x})\leq \triangledown (\rho_2\bm{f})(\bm{x}), \forall \bm{x}\in \overline{\mathcal{X}\setminus \mathcal{X}_r}$ in condition \eqref{weak_2} should be $\rho_1(\bm{x})<\triangledown (\rho_2\bm{f})(\bm{x}), \forall \bm{x}\in \overline{\mathcal{X}\setminus \mathcal{X}_r}$, and the others remain the same. Due to space limitations we omit the proof here.
\end{remark}

\subsection{Generalization to Strong Reach-avoid Verification}
\label{sub:sraa}
In this subsection we exploit the differences between conditions \eqref{strong_exponen_con}/\eqref{strong_asympt_con} and \eqref{weak_exponential_con}/\eqref{weak_asympt_con}, and explore the situations, under which the sufficient conditions in Theorem \ref{exponential_almost} and \ref{sdf} can also be used to verify the strong reach-avoid property in the sense of Definition \ref{SRNS}.

 The main difference between conditions \eqref{strong_exponen_con}/\eqref{strong_asympt_con} and \eqref{weak_exponential_con}/\eqref{weak_asympt_con} lies in that condition \eqref{strong_exponen_con}/\eqref{strong_asympt_con} uses $\triangledown (\rho\bm{f})(\bm{x})$ rather than $\triangledown  \rho(\bm{x})\cdot \bm{f}(\bm{x})$. Comparing to $\triangledown \rho(\bm{x})\cdot \bm{f}(\bm{x})$, the term $\triangledown (\rho\bm{f})(\bm{x})$ has an additional term $\rho(\bm{x})\triangledown \bm{f}(\bm{x})$. Therefore, when 
 \begin{equation}
 \label{equal0}
 \triangledown \bm{f}(\bm{x})\equiv 0, \forall \bm{x} \in \overline{\mathcal{X}\setminus \mathcal{X}_r},
 \end{equation}
 we have that conditions \eqref{weak_exponential_con} and \eqref{weak_asympt_con} are respectively a special form of ones \eqref{strong_exponen_con} and \eqref{strong_asympt_con}. In this case, if condition \eqref{weak_exponential_con} or \eqref{weak_asympt_con}  holds, we can also conclude that the strong reach-avoid property in the sense of Definition \ref{SRNS} holds. We do not give the proofs here since this conclusion is just a special case of Corollary \ref{weak_strong_ex} and \ref{weak_starong_as} shown below. However, condition \eqref{equal0} may be quite restrictive in practice, limiting the use of conditions \eqref{weak_exponential_con} and \eqref{weak_asympt_con} in verifying the strong reach-avoid property. In order to overcome this issue, we in the following formulate two less conservative constraints such that the satisfaction of condition \eqref{weak_exponential_con} or \eqref{weak_asympt_con} also implies the satisfaction of the strong reach-avoid property. They are respectively formulated in  Corollary \ref{weak_strong_ex} and \ref{weak_starong_as}.

\begin{corollary}
\label{weak_strong_ex}
If there exist a density function $\rho(\bm{x})\in \mathcal{C}^1(\overline{\mathcal{X}})$ and a continuous function 
\begin{equation}
\label{expo_condition}
    \lambda(\bm{x})>\triangledown \bm{f}(\bm{x}), \forall \bm{x}\in \overline{\mathcal{X}\setminus \mathcal{X}_r},
\end{equation} 
which satisfy \eqref{weak_exponential_con}, then the strong reach-avoid property holds.
\end{corollary}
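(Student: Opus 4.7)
The plan is to imitate the proof structure of Theorem \ref{exponential_almost}, but using the stronger assumption \eqref{expo_condition} to show that \emph{every} trajectory from $\mathcal{X}_0$ (not just almost every) reaches $\mathcal{X}_r$ while remaining in $\mathcal{X}$. The starting observation is the product rule $\triangledown(\rho\bm{f})(\bm{x}) = \triangledown\rho(\bm{x})\cdot\bm{f}(\bm{x}) + \rho(\bm{x})\triangledown\bm{f}(\bm{x})$, which lets us rewrite the divergence inequality in \eqref{weak_exponential_con} as $\triangledown\rho(\bm{x})\cdot\bm{f}(\bm{x}) \geq (\lambda(\bm{x})-\triangledown\bm{f}(\bm{x}))\,\rho(\bm{x})$ on $\overline{\mathcal{X}\setminus\mathcal{X}_r}$. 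By \eqref{expo_condition} the function $\tilde\lambda(\bm{x}) := \lambda(\bm{x})-\triangledown\bm{f}(\bm{x})$ is continuous and strictly positive on the compact set $\overline{\mathcal{X}\setminus\mathcal{X}_r}$, so there exists $\delta>0$ with $\tilde\lambda(\bm{x})\geq \delta$ everywhere on that set.

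Next, I would fix any $\bm{x}_0 \in \mathcal{X}_0$ (so $\rho(\bm{x}_0)>0$) and define $T^{*} = \sup\{T\ge 0 : \bm{\phi}_{\bm{x}_0}(t)\in\overline{\mathcal{X}\setminus\mathcal{X}_r}\ \text{for all}\ t\in[0,T]\}$. On $[0,T^{*}]$ one has $\frac{d}{dt}\rho(\bm{\phi}_{\bm{x}_0}(t)) \geq \tilde\lambda(\bm{\phi}_{\bm{x}_0}(t))\,\rho(\bm{\phi}_{\bm{x}_0}(t))$ whenever $\rho(\bm{\phi}_{\bm{x}_0}(t))\geq 0$. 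A first-exit argument (identical in spirit to the first contradiction step in the proof of Theorem \ref{exponential_almost}) combined with Grönwall's inequality then gives $\rho(\bm{\phi}_{\bm{x}_0}(t))\geq \rho(\bm{x}_0)e^{\delta t}>0$ on $[0,T^{*}]$; in particular $\rho$ stays strictly positive, hence the trajectory never touches $\partial\mathcal{X}$ (where $\rho\leq 0$), so $\bm{\phi}_{\bm{x}_0}(t)\in\mathcal{X}$ for all $t\in[0,T^{*}]$.

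To finish, I would show $T^{*}<\infty$: since $\rho$ is continuous on the compact set $\overline{\mathcal{X}}$, it is bounded above, but $\rho(\bm{x}_0)e^{\delta t}\to\infty$, forcing the trajectory to exit $\overline{\mathcal{X}\setminus\mathcal{X}_r}$ in finite time. The exit point lies in $\partial(\overline{\mathcal{X}\setminus\mathcal{X}_r}) \subseteq \partial\mathcal{X}\cup\partial\mathcal{X}_r$, and the already-established $\rho(\bm{\phi}_{\bm{x}_0}(T^{*}))>0$ rules out $\partial\mathcal{X}$; therefore $\bm{\phi}_{\bm{x}_0}(T^{*})\in\partial\mathcal{X}_r\subseteq\mathcal{X}_r$, which together with the safety conclusion of the previous paragraph is precisely the strong reach-avoid property in Definition \ref{SRNS}.

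The main obstacle is that one cannot simply apply Proposition \ref{inequality1} with $v:=\rho$ and a constant $\lambda$: because $\rho$ can take negative values in $\overline{\mathcal{X}\setminus\mathcal{X}_r}$, replacing the function $\tilde\lambda(\bm{x})$ by its infimum $\delta$ flips the direction of the inequality $\tilde\lambda(\bm{x})\rho(\bm{x})\geq \delta\rho(\bm{x})$ on the set where $\rho<0$. The resolution is to argue trajectorywise starting from $\mathcal{X}_0$: along such a trajectory $\rho$ is \emph{initially} positive, and the strict inequality $\lambda>\triangledown\bm{f}$ (giving $\delta>0$, rather than only $\delta\geq 0$) is exactly what is needed to propagate this positivity and drive the exponential growth that forces entry into $\mathcal{X}_r$.
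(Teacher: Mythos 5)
Your proposal is correct and follows essentially the same route as the paper's own proof: both rewrite the divergence condition as $\triangledown\rho\cdot\bm{f}\geq(\lambda-\triangledown\bm{f})\rho$, use compactness of $\overline{\mathcal{X}\setminus\mathcal{X}_r}$ to extract a uniform rate $\delta>0$, apply Gr\"onwall to get exponential growth of $\rho$ along the trajectory (which both keeps the trajectory off $\partial\mathcal{X}$ and, by boundedness of $\rho$, forces a finite exit time), and conclude the exit must be into $\mathcal{X}_r$. Your closing remark about why Proposition~\ref{inequality1} cannot be invoked directly with $v:=\rho$ is a sound observation, but it does not change the substance of the argument, which matches the paper's.
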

\begin{proof}
We firstly show that there does not exist an initial state $\bm{x}_0\in \mathcal{X}_0$ such that 
\[\bm{\phi}_{\bm{x}_0}(t)\in \mathcal{R}\setminus \mathcal{X}_r, \forall t\in [0,\infty),\] 
where $\mathcal{R}=\{\bm{x}\in \mathcal{X}\mid \rho(\bm{x})>0\}$.

Assume that $\bm{\phi}_{\bm{x}_0}(t)\in \mathcal{R}\setminus \mathcal{X}_r, \forall t\in [0,\infty)$ holds. From constraints $\lambda(\bm{x}) \rho(\bm{x})\leq \triangledown(\rho\bm{f})(\bm{x}), \forall \bm{x}\in \overline{\mathcal{R}\setminus \mathcal{X}_r}$ and $\lambda(\bm{x})>\triangledown \bm{f}(\bm{x}), \forall \bm{x}\in \overline{\mathcal{X}}$, we have that for $t\geq 0$,
\begin{equation}
\label{exponential}
\begin{split}
&\triangledown \rho(\bm{x})\cdot\bm{f}(\bm{x})\mid_{\bm{x}=\bm{\phi}_{\bm{x}_0}(t)}\geq \\
&(\lambda(\bm{x})-\triangledown \bm{f}(\bm{x})) \rho(\bm{x})\mid_{\bm{x}=\bm{\phi}_{\bm{x}_0}(t)}.
\end{split}
\end{equation}
Further, since $\overline{\mathcal{X}\setminus \mathcal{X}_r}$ is compact, there exists $\epsilon_0>0$ such that $\lambda(\bm{x})-\triangledown \bm{f}(\bm{x})\geq \epsilon_0, \forall \bm{x}\in \overline{\mathcal{X}\setminus \mathcal{X}_r}$. Thus, we have $\rho(\bm{\phi}_{\bm{x}_0}(t))\geq e^{\epsilon_0 t} \rho(\bm{x}_0), \forall t\in [0,\infty)$,
which contradicts that $\rho(\bm{x})$ is bounded over $\overline{\mathcal{X}}$. Therefore, these exists $\tau'\geq 0$ such that $\bm{\phi}_{\bm{x}_0}(\tau')\notin \mathcal{R}\setminus \mathcal{X}_r$. 

Besides, constraint \eqref{exponential} implies that 
\[ \rho(\bm{\phi}_{\bm{x}_0}(t))\geq e^{\epsilon_0 t}\rho(\bm{x}_0)>0, \forall t\in [0,T],\]
where $T=\max\{t\mid \forall \tau\in[0,t]. \bm{\phi}_{\bm{x}_0}(\tau) \in \mathcal{R}\setminus \mathcal{X}_r\}$.
Since $\rho(\bm{x})=0$ for $\bm{x}\in \partial \mathcal{R}$, we have that $\bm{\phi}_{\bm{x}_0}(T) \in \mathcal{X}_r$.  Since $\mathcal{R}\subseteq \mathcal{X}$, we have that the strong reach-avoid property in the sense of Definition \ref{SRNS} holds. 
\end{proof}

Corollary \ref{weak_strong_ex} indicates that when $\lambda(\bm{x})>\triangledown\bm{f}(\bm{x})$ over $\overline{\mathcal{X}\setminus \mathcal{X}_r}$, condition \eqref{weak_exponential_con} can also be used for the strong reach-avoid verification and behaves like condition \eqref{strong_exponen_con}. However, it is observed that condition \eqref{weak_exponential_con} is more expressive than condition \eqref{strong_exponen_con}, since condition \eqref{strong_exponen_con} is just a special instance of condition \eqref{weak_exponential_con} with $\lambda(\bm{x})=\triangledown \bm{f}(\bm{x})+\lambda$. \textit{Furthermore, it is interesting to find that when $\max_{\bm{x}\in \overline{\mathcal{X}\setminus \mathcal{X}_r}}\triangledown \bm{f}(\bm{x})<0$, the continuous function $\lambda(\bm{x})$ in condition \eqref{weak_exponential_con} can be further relaxed and is not necessary to be positive over $\overline{\mathcal{X}\setminus \mathcal{X}_r}$ for both the weak and strong reach-avoid verification.} In case that $\max_{\bm{x}\in \overline{\mathcal{X}\setminus \mathcal{X}_r}}\triangledown \bm{f}(\bm{x})>0$, a continuous function $\lambda(\bm{x})$ satisfying 
\begin{equation*}
%\label{weak_expoen1}
\forall \bm{x}\in \overline{\mathcal{X}\setminus \mathcal{X}_r}. \lambda(\bm{x}) >0 \wedge \exists \bm{x}\in \overline{\mathcal{X}\setminus \mathcal{X}_r}. \lambda(\bm{x}) \leq \triangledown \bm{f}(\bm{x})
\end{equation*}
will render condition \eqref{weak_exponential_con} only applicable to the weak reach-avoid verification of system \eqref{sys}.

\begin{corollary}
\label{weak_starong_as}
 If there exist density functions $\rho_1(\bm{x}),\rho_2(\bm{x})\in \mathcal{C}^1(\overline{\mathcal{X}})$ and $\lambda(\bm{x}) \in \mathcal{C}(\overline{\mathcal{X}})$ satisfying condition \eqref{weak_asympt_con}, then the strong reach-avoid property in the sense of Definition \ref{SRNS} holds when  
 \[\rho_2(\bm{x})\triangledown \bm{f}(\bm{x})\leq 0 \text{~and~}\lambda(\bm{x})\geq \triangledown \bm{f}(\bm{x})\]
 for $\bm{x}\in \overline{\mathcal{X}\setminus \mathcal{X}_r}$. 
\end{corollary}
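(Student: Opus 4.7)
My plan is to follow the template of Corollary \ref{weak_strong_ex}, using the second density $\rho_2$ to rule out trapped trajectories. Set $\mathcal{R}:=\{\bm{x}\in\mathcal{X}\mid \rho_1(\bm{x})>0\}$; by the first and fourth lines of \eqref{weak_asympt_con} we have $\mathcal{X}_0\subseteq\mathcal{R}\subseteq\mathcal{X}$. For an arbitrary $\bm{x}_0\in\mathcal{X}_0$, let $T$ denote the first exit time of $\bm{\phi}_{\bm{x}_0}(t)$ from $\mathcal{R}\setminus\mathcal{X}_r$ (with $T=\infty$ if no such exit occurs). It suffices to prove $T<\infty$ and $\bm{\phi}_{\bm{x}_0}(T)\in\mathcal{X}_r$, since $\mathcal{R}\subseteq\mathcal{X}$ already guarantees containment in the safe set up to the first target hit.

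I would first establish that $\rho_1$ is nondecreasing along $\bm{\phi}_{\bm{x}_0}$. Expanding $\triangledown(\rho_1\bm{f})=\triangledown\rho_1\cdot\bm{f}+\rho_1\triangledown\bm{f}$ in the second line of \eqref{weak_asympt_con} yields $\triangledown\rho_1\cdot\bm{f}\geq (\lambda(\bm{x})-\triangledown\bm{f}(\bm{x}))\rho_1$; under the hypothesis $\lambda(\bm{x})\geq \triangledown\bm{f}(\bm{x})$, the right-hand side is nonnegative on $\overline{\mathcal{R}\setminus\mathcal{X}_r}$, so $\rho_1(\bm{\phi}_{\bm{x}_0}(t))\geq \rho_1(\bm{x}_0)>0$ for all $t\in[0,T)$. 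Both $\partial\mathcal{R}$ (where $\rho_1=0$) and $\partial\mathcal{X}$ (where $\rho_1\leq 0$) are therefore excluded as possible exit points, so once $T<\infty$ is secured, continuity of the flow combined with openness of $\mathcal{R}\setminus\mathcal{X}_r$ forces $\bm{\phi}_{\bm{x}_0}(T)\in\mathcal{X}_r$.

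The main obstacle, and the reason the second density $\rho_2$ enters the hypotheses, is excluding $T=\infty$. Suppose for contradiction that $\bm{\phi}_{\bm{x}_0}(t)\in\mathcal{R}\setminus\mathcal{X}_r$ for every $t\geq 0$. Combining the third line of \eqref{weak_asympt_con} with the assumption $\rho_2\triangledown\bm{f}\leq 0$ gives $\triangledown\rho_2\cdot\bm{f}=\triangledown(\rho_2\bm{f})-\rho_2\triangledown\bm{f}\geq \rho_1-\rho_2\triangledown\bm{f}\geq \rho_1$ on $\overline{\mathcal{X}\setminus\mathcal{X}_r}$. Evaluating along the flow and using the monotonicity from the previous step yields $\frac{d}{dt}\rho_2(\bm{\phi}_{\bm{x}_0}(t))\geq \rho_1(\bm{x}_0)>0$, so integration produces $\rho_2(\bm{\phi}_{\bm{x}_0}(t))\geq \rho_2(\bm{x}_0)+\rho_1(\bm{x}_0)\,t\to\infty$, contradicting boundedness of the continuous function $\rho_2$ on the compact set $\overline{\mathcal{X}}$. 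Hence $T<\infty$, and combining with the exit analysis of the preceding paragraph delivers the strong reach-avoid property in the sense of Definition \ref{SRNS}.
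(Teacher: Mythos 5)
Your proposal is correct and follows essentially the same route as the paper: both derive $\triangledown\rho_1(\bm{x})\cdot\bm{f}(\bm{x})\geq 0$ and $\triangledown\rho_2(\bm{x})\cdot\bm{f}(\bm{x})\geq\rho_1(\bm{x})$ on $\overline{\mathcal{R}\setminus\mathcal{X}_r}$ from the divergence hypotheses, thereby reducing to the situation of Proposition \ref{inequality} with $v=\rho_1$, $w=\rho_2$. The only difference is that the paper then simply invokes the proof of the cited proposition from \cite{xue2022reach}, whereas you carry out that final exit-time/unboundedness argument explicitly, which is a harmless (indeed helpful) expansion.
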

\begin{proof}
From constraints \[\rho_1(\bm{x})\leq \triangledown(\rho_2\bm{f})(\bm{x}), \forall \bm{x}\in \overline{\mathcal{R}\setminus \mathcal{X}_r}\] and $\triangledown \bm{f}(\bm{x})\rho_2(\bm{x})\leq 0, \forall \bm{x}\in \overline{\mathcal{R}\setminus \mathcal{X}_r}$, where $\mathcal{R}=\{\bm{x}\in \mathcal{X}\mid \rho(\bm{x})>0\}$, we have that 
\[\triangledown \rho_2(\bm{x})\cdot\bm{f}(\bm{x})\geq \rho_1(\bm{x}), \forall \bm{x}\in \overline{\mathcal{R}\setminus \mathcal{X}_r}.\]
Further, since $\lambda(\bm{x})\geq \triangledown \bm{f}(\bm{x})$ over $\overline{\mathcal{X}\setminus \mathcal{X}_r}$, we have that $\triangledown \rho_1(\bm{x})\cdot\bm{f}(\bm{x})\geq 0$ over $\overline{\mathcal{R}\setminus \mathcal{X}_r}$. Following the proof of  Proposition 5 in \cite{xue2022reach}, we have the conclusion.
\end{proof}

If $\lambda(\bm{x})> \triangledown \bm{f}(\bm{x})$ over $\overline{\mathcal{X}\setminus \mathcal{X}_r}$, the constraint $\rho_2(\bm{x})\triangledown \bm{f}(\bm{x})\leq 0, \forall \bm{x}\in \overline{\mathcal{X}\setminus \mathcal{X}_r}$ in Corollary \ref{weak_starong_as} is redundant since the constraint  $\rho_1(\bm{x})\leq \triangledown (\rho_2\bm{f})(\bm{x}), \forall \bm{x}\in \overline{\mathcal{X}\setminus \mathcal{X}_r}$ in condition \eqref{weak_asympt_con} can be removed, according to Corollary \ref{weak_strong_ex}.

 It is worth noting here that if  $\lambda(\bm{x}) \in \mathcal{C}(\overline{\mathcal{X}})$ in condition \eqref{weak_exponential_con} (or, \eqref{weak_asympt_con}) does not satisfy the aforementioned conditions, and it is just a continuous function over $\mathcal{C}(\overline{\mathcal{X}})$, the condition \eqref{weak_exponential_con} (or, \eqref{weak_asympt_con}) can deal with the case that the safety and performance objectives are in conflict, but the safety is prioritized. In this case system \eqref{sys} starting from $\mathcal{R}=\{\bm{x}\in \mathcal{X}\mid \rho(\bm{x})>0\}$ (or, $\mathcal{R}=\{\bm{x}\in \mathcal{X}\mid \rho_1(\bm{x})>0\}$) will stay inside the set $\mathcal{R}\setminus \mathcal{X}_r$, which is a subset of the safe set $\mathcal{X}$, if it cannot reach the target set $\mathcal{X}_r$. However, a qualitative characterization of initial states in $\mathcal{R}$ such that system \eqref{sys} enters $\mathcal{X}_r$ cannot be given.

 % and a constant $\lambda$ satisfying $0<\lambda \leq \lambda(\bm{x})$ will render condition \eqref{weak_exponential_con} 

\section{Examples}
\label{sec:exam}
In this section we demonstrate our theoretical developments on one example. The condition used for computations are relaxed into semi-definite constraints based on the sum-of-squares decomposition for multivariate polynomials. The formulated semi-definite programs are presented in Appendix. The sum-of-squares module of YALMIP \cite{lofberg2004} was used to transform the sum-of-squares optimization problem into a semi-definite program and the solver Mosek \cite{mosek2015mosek} was used to solve the resulting semi-definite program.

\begin{example}
\label{ex1}
Consider an academic example from \cite{xue2022reach},
\begin{equation}
    \begin{cases}
           &\dot{x}=-0.5x-0.5y+0.5xy\\
           &\dot{y}=-0.5y+0.5
    \end{cases}
\end{equation}
with $\mathcal{X}=\{(x,y)^{\top}\mid x^2 + y^2 - 1 <0\}$, $\mathcal{X}_r=\{(x,y)^{\top}\mid (x+0.2)^2 + (y -0.7)^2 - 0.02\leq 0\}$ and $\mathcal{X}_0=\{(x,y)^{\top}\mid (x - 0.3)^2 + (y + 0.6)^2 -0.01<0\}$.

In this experiment we take $\lambda(\bm{x})\equiv$  Constant over $\overline{\mathcal{X}}$ in conditions \eqref{weak_exponential_con} and \eqref{weak_asympt_con}. 

Due to the presence of multiple unknown polynomials in solving semi-definite programs \eqref{sos2}, \eqref{sos1}, \eqref{sos3}, \eqref{sos4} and \eqref{sos5}, we use the following procedure for automatically assigning parametric templates to these polynomials.  Given degree $d$, the used polynomial templates are ones including all monomials of degree less than or equal to $d$. In the following procedure, $d_{\rho}$ and $d_s$ respectively denote the degree of the polynomials $\{\rho_1(\bm{x}),\rho_2(\bm{x}),\rho(\bm{x}),v(\bm{x}),w(\bm{x})\}$ and $\{s_i(\bm{x}),i=0,1,2,3,4,p(\bm{x})\}$. The degrees of polynomials used for verifying strong/weak properties successfully via solving these SDPs are presented in Table \ref{tab-1}. Some of computed $\mathcal{R}$'s are visualized in Fig. \ref{fig_1}. Since $\max_{\bm{x}\in \overline{\mathcal{X}\setminus \mathcal{X}_r}} \triangledown \bm{f}(\bm{x})\leq -0.50$, $\lambda<0$ is also allowed in conditions \eqref{weak_exponential_con} and \eqref{weak_asympt_con} for performing verification, and that condition \eqref{weak_exponential_con} or \eqref{weak_asympt_con} holds  also implies the satisfaction of the strong reach-avoid property according to Corollary  \ref{weak_strong_ex} and \ref{weak_starong_as}. 

Besides, it is interesting to find from Table \ref{tab-1} that conditions \eqref{weak_exponential_con} and \eqref{weak_asympt_con} are also able to facilitate the weak/strong reach-avoid verification efficiently for some cases.

\begin{algorithm}
    \begin{algorithmic}
        \FOR{$d_{\rho}=6:1:12$}
           \FOR{$d_s= 2\lceil \frac{d_{\rho}}{2} \rceil :2:2d_{\rho}$}
              \STATE solve  \eqref{sos2}  $\setminus$ \eqref{sos1} $\setminus$ \eqref{sos3} $\setminus$ \eqref{sos4} $\setminus$ \eqref{sos5}
              \IF{Solved Successfully}
                \RETURN $d_{\rho}$, $d_s$ and $\rho(\bm{x})$
              \ENDIF
           \ENDFOR
        \ENDFOR
    \end{algorithmic}
\end{algorithm}

\begin{table}
\begin{center}
\begin{tabular}{|c|c|c|c|c|}
  \hline
    SDP& $\lambda$&$d_{\rho}$&$d_{s}$\\\hline
     \eqref{sos2} &0.001 &6  &12 \\\hline
     \eqref{sos2} &-0.499 &6  &6 \\\hline
     \eqref{sos1} & 0 & 6 &6  \\\hline
     \eqref{sos3} & -    &6  &12\\\hline
     \eqref{sos4} &-& 10 &10 \\\hline
     \eqref{sos5} &0.001 & 10 &10 \\\hline
   \end{tabular}
\end{center}~
\caption{Parameters of solving SDP \eqref{sos2}-\eqref{sos5} to verify the strong/weak reach-avoid properties successfully ('-' means that $\lambda$ is not used).}
\label{tab-1}
\end{table}

\begin{figure}[hbtp]
\center
\subfigure[]{\includegraphics[width=0.20\textwidth]{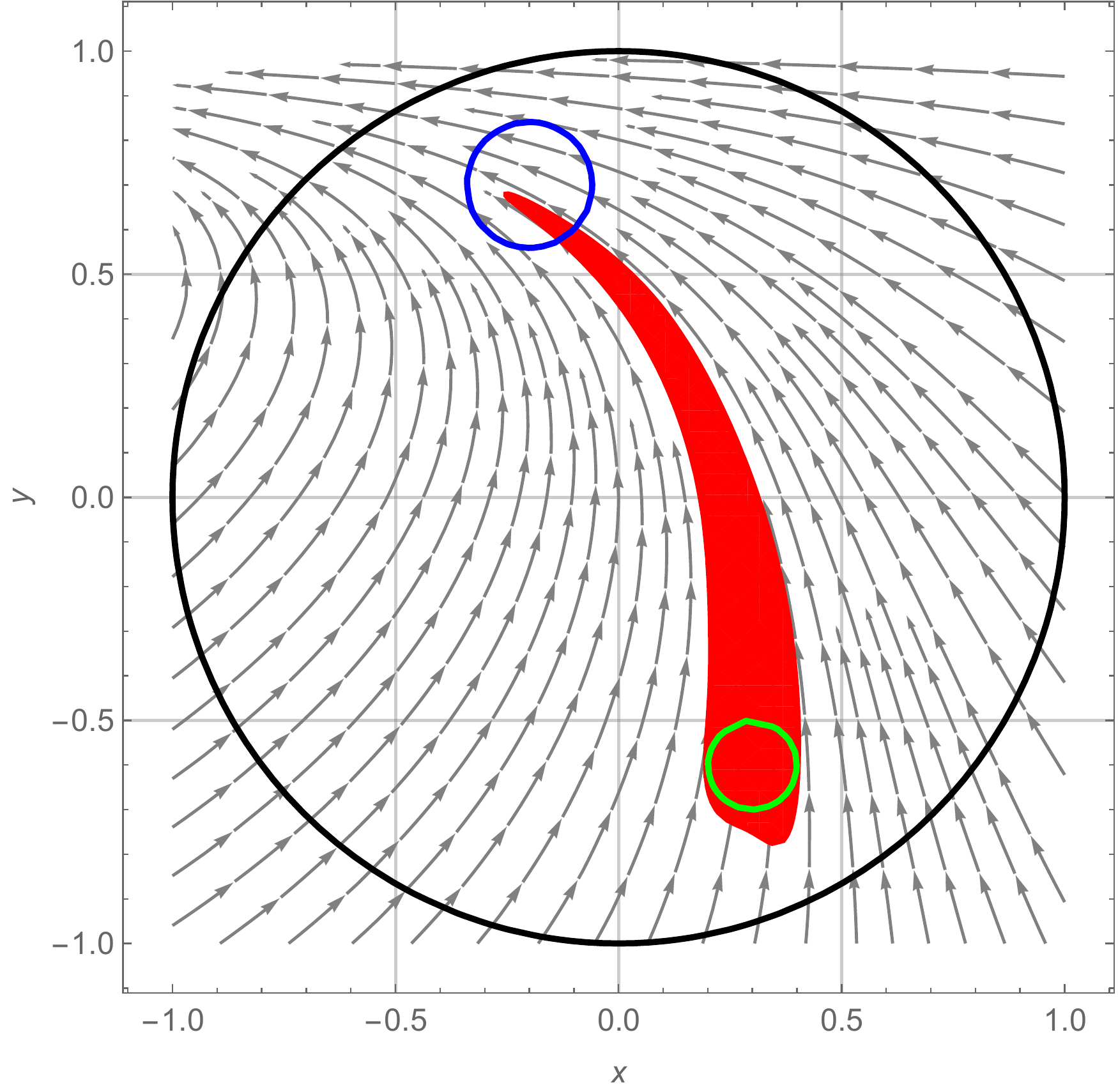}} 
\subfigure[]{\includegraphics[width=0.20\textwidth]{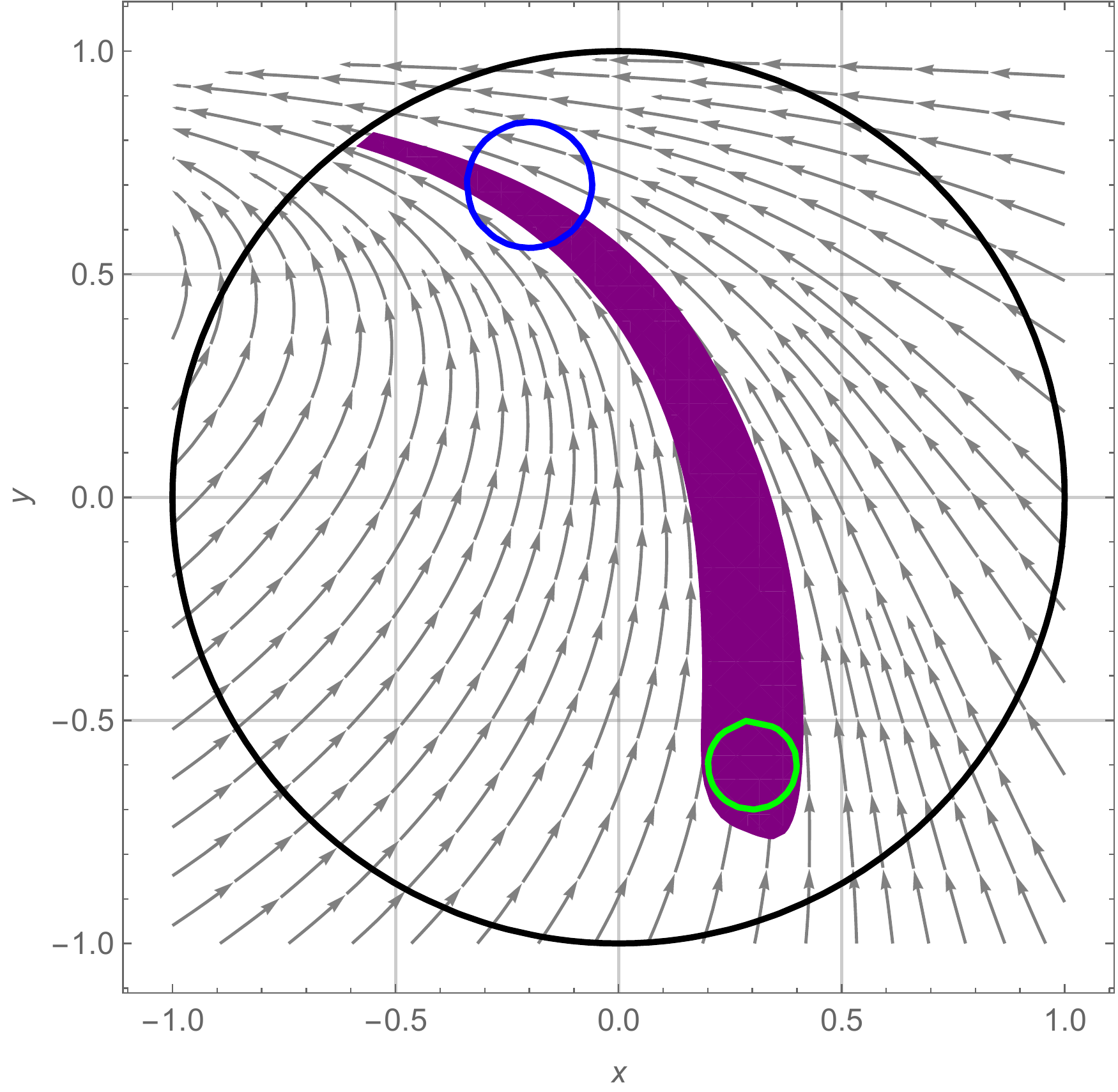}}
%and  each component of $\bm{c}$ is in $[-10^2,10^2]$ 
\caption{Blue, black and green curves - $\partial \mathcal{X}_r$, $\partial \mathcal{X}$ and $\partial \mathcal{X}_0$; red region - $\{\bm{x}\in \mathcal{X}\mid \rho_1(\bm{x})>0\}$ via solving SDP \eqref{sos1}; purple region - $\{\bm{x}\in \mathcal{X}\mid \rho(\bm{x})>0\}$ via solving SDP \eqref{sos2} with $\lambda=-0.499$.}
\label{fig_1}
\end{figure}

\end{example}

\section{Conclusion}
\label{sec:conclusion}
In this paper we investigated the reach-avoid verification of continuous-time systems modeled by ordinary differential equations using Lyapunov densities. Two new sufficient conditions were proposed for the weak reach-avoid verification, which are shown to be weaker than existing ones. Then, via analyzing the divergence of the vector field and constraining it, we generalized the proposed two conditions to the strong reach-avoid verification.  Finally, we demonstrated our theoretical developments on one example. The experimental results also showed that the proposed conditions can facilitate the weak/strong reach-avoid verification efficiently.

An appealing point of Lyapunov densities lies in facilitating the control design using convex optimization, especially for control-affine dynamics \cite{1266794}. In the future we would investigate the reach-avoid controller synthesis based on the proposed conditions in the present work.  
\bibliographystyle{abbrv}
\bibliography{reference}
%\newpage
\section*{Appendix}
\footnotesize
\begin{algorithm}
 The semi-definite program for solving constraint \eqref{weak_exponential_con}:
\begin{equation}
\label{sos2}
 \begin{cases}
 \triangledown (\rho \bm{f})(\bm{x})-\lambda(\bm{x})\rho(\bm{x})+s_0(\bm{x}) h(\bm{x})-s_1(\bm{x})g(\bm{x})\in \sum[\bm{x}],\\
 \rho(\bm{x})-\epsilon_0+s_2(\bm{x})l(\bm{x})\in \sum[\bm{x}],\\
 -\rho(\bm{x})+p(\bm{x})h(\bm{x})\in \sum[\bm{x}],
 \end{cases}
 \end{equation}
 where $\epsilon_0=10^{-6}$, $\rho(\bm{x}), p(\bm{x}), \bm{\psi}_i(\bm{x})\in \mathbb{R}[\bm{x}]$, and $s_{j}(\bm{x})\in \sum[\bm{x}]$, $i=0,\ldots,2$.
\end{algorithm}
\vspace{-1cm}
\begin{algorithm}
The semi-definite program for solving constraint \eqref{weak_asympt_con}:
\begin{equation}
\label{sos1}
 \begin{cases}
 \triangledown (\rho_1 \bm{f})(\bm{x})-\lambda(\bm{x})\rho_1(\bm{x})+s_0(\bm{x}) h(\bm{x})-s_1(\bm{x})g(\bm{x})\in \sum[\bm{x}],\\
 \triangledown (\rho_2\bm{f})(\bm{x})-\rho_1(\bm{x})+s_2(\bm{x}) h(\bm{x})-s_3(\bm{x})g(\bm{x})\in \sum[\bm{x}],\\
 \rho_1(\bm{x})-\epsilon_0+s_4(\bm{x})l(\bm{x})\in \sum[\bm{x}],\\
 -\rho_1(\bm{x})+p(\bm{x})h(\bm{x})\in \sum[\bm{x}],
 \end{cases}
 \end{equation}
 where $\epsilon_0=10^{-6}$, $\rho(\bm{x}), p(\bm{x}), \bm{\psi}_i(\bm{x})\in \mathbb{R}[\bm{x}]$, and $s_{j}(\bm{x})\in \sum[\bm{x}]$, $i=0,\ldots,4$.
 \end{algorithm}
 \vspace{-1cm}
\begin{algorithm}
  The semi-definite program for solving constraint \eqref{weak_2}
\begin{equation}
\label{sos3}
 \begin{cases}
        \triangledown(\rho\bm{f})(\bm{x})-\epsilon'_0+s_0(\bm{x}) h(\bm{x})-s_1(\bm{x})g(\bm{x})\in \sum[\bm{x}],\\
         \rho(\bm{x})-\epsilon_0+s_2(\bm{x})l(\bm{x})\in \sum[\bm{x}],\\
         -\rho(\bm{x})+p(\bm{x})h(\bm{x})\in \sum[\bm{x}],
\end{cases}
 \end{equation}
 where $\epsilon_0=10^{-6}$, $\rho(\bm{x}), p(\bm{x}), \bm{\psi}_i(\bm{x})\in \mathbb{R}[\bm{x}]$, and $s_{j}(\bm{x})\in \sum[\bm{x}]$, $i=0,\ldots,2$.
 \end{algorithm}
 \vspace{-0.5cm}
\begin{algorithm}
The semi-definite program for solving constraint \eqref{strong_asympt_con}:
 \begin{equation}
 \label{sos4}
\begin{cases}
\bigtriangledown  v(\bm{x}) \cdot \bm{f}(\bm{x})+s_0(\bm{x}) h(\bm{x})-s_1(\bm{x})g(\bm{x})\in \sum[\bm{x}],\\
\bigtriangledown  w(\bm{x}) \cdot \bm{f}(\bm{x})-v(\bm{x})+s_2(\bm{x}) h(\bm{x})-s_3(\bm{x})g(\bm{x})\in \sum[\bm{x}],\\
v(\bm{x})-\epsilon_0+s_4(\bm{x})l(\bm{x})\in \sum[\bm{x}],\\
-v(\bm{x})+p(\bm{x})h(\bm{x})\in \sum[\bm{x}],
%&v(\bm{x})\leq 1, \forall \bm{x}\in \partial  \mathcal{X}_r,\label{con6_0} 
\end{cases}
\end{equation}
 where $\epsilon_0=10^{-6}$, $v(\bm{x}), w(\bm{x}), p(\bm{x})\in \mathbb{R}[\bm{x}]$, and $s_{j}(\bm{x})\in \sum[\bm{x}]$, $i=0,\ldots,4$.
\end{algorithm}
\vspace{-0.6cm}
\begin{algorithm}
The semi-definite program for solving constraint \eqref{strong_exponen_con}:
\begin{equation}
\label{sos5}
 \begin{cases}
 \bigtriangledown  v(\bm{x}) \cdot \bm{f}(\bm{x})-\lambda v(\bm{x})+s_0(\bm{x}) h(\bm{x})-s_1(\bm{x})g(\bm{x})\in \sum[\bm{x}],\\
 v(\bm{x})-\epsilon_0+s_2(\bm{x})l(\bm{x})\in \sum[\bm{x}],\\
 -v(\bm{x})+p(\bm{x})h(\bm{x})\in \sum[\bm{x}],
 \end{cases}
 \end{equation}
 where $\epsilon_0=10^{-6}$, $v(\bm{x}), p(\bm{x})\in \mathbb{R}[\bm{x}]$, and $s_{j}(\bm{x})\in \sum[\bm{x}]$, $i=0,\ldots,2$.
\end{algorithm}
\vspace{-0.5cm}

\end{document}